\documentclass[11pt]{article}
\usepackage{mathrsfs}
\usepackage{amsmath}
\usepackage{amsfonts}
\usepackage{amssymb, amsmath, cite}
\usepackage{color}

\usepackage{mathrsfs}
\usepackage{dsfont}
\usepackage{amsthm}
\usepackage{mathrsfs}
\usepackage{amsmath}
\usepackage{amsfonts}
\usepackage[colorlinks,linkcolor=black,anchorcolor=blue,citecolor=blue]{hyperref}
\usepackage{amssymb, amsmath, cite}

\setlength{\textwidth}{6.5truein} \setlength{\textheight}{9.3truein}
\setlength{\oddsidemargin}{-0.0in}
\setlength{\evensidemargin}{-0.0in}
\setlength{\topmargin}{-0.6truein}
\newtheorem{theorem}{Theorem}[section]
\newtheorem{lemma}{Lemma}[section]

\newcommand\be{\begin{equation}}
\newcommand\ee{\end{equation}}
\newcommand\ber{\begin{eqnarray}}
\newcommand\bea{\begin{eqnarray}}
\newcommand\eea{\end{eqnarray}}
\newcommand\eer{\end{eqnarray}}
\newcommand\berr{\begin{eqnarray*}}
\newcommand\eerr{\end{eqnarray*}}
\newcommand\lm{\lambda}

\newcommand\vep{\varepsilon}\newcommand{\ii}{\mathrm{i}}
\newcommand{\dd}{\mathrm{d}}\newcommand\bfR{\mathbb{R}}
\newcommand\e{\mathrm{e}}\newcommand\pa{\partial}
\setlength{\baselineskip}{18pt}{\setlength\arraycolsep{2pt}
\newcommand\nn{\nonumber}
\newcommand{\BPS}{{\mbox{\rm \tiny{BPS}}}}
\usepackage{graphicx}
\setlength{\baselineskip}{18pt}{\setlength\arraycolsep{2pt}

\begin{document}

\title{Solutions to the Minimization Problem Arising in a \\ Dark Monopole Model in Gauge Field Theory}
\author{Xiangqin Zhang\\School of Mathematics and Statistics\\ Henan University\\
 Kaifeng, Henan 475004, PR China\\\\
 Yisong Yang\\Courant Institute of Mathematical Sciences\\New York University\\New York, NY 10012, USA}
\date{}
\maketitle

\begin{abstract}
We prove the existence of dark monopole solutions in a recently formulated Yang--Mills--Higgs theory model
with technical features similar to the classical monopole problems. The solutions are obtained as energy-minimizing static spherically symmetric field configurations of unit topological charge.
We overcome the difficulty of recovering the full set of boundary conditions by a regularization method which
may be applied to other more complicated problems concerning monopoles and dyons in non-Abelian gauge field theories.
Furthermore we show in a critical coupling situation that an explicit BPS solution may be used to provide energy estimates for
non-BPS monopole solutions. Besides, in the limit of infinite Higgs coupling parameter, although no explicit construction is available, we establish an existence and uniqueness result for a
monopole solution and obtain its energy bounds.
\end{abstract}

\medskip
\begin{enumerate}

\item[]
{Keywords:} Non-Abelian gauge field theories,  monopoles, 
minimization, singularity, regularization, energy estimates.

\item[]
PACS numbers: 02.30.Hq, 11.15.−q, 11.27.+d, 12.39.Ba.

\item[]
{MSC numbers:} 34B40, 35J50, 81T13.

\end{enumerate}

\section{Introduction}
\setcounter{equation}{0}

It is well known that the existence of a magnetic monopole was first theoretically conceived by P. Curie \cite{Curie} based on the
electromagnetic duality observed in the Maxwell equations. Later, Dirac \cite{Dirac} explored the quantum mechanical implication of the
presence of a magnetic monopole and demonstrated that the existence of such in nature would explain why electric charges are
quantized as multiples of a small common unit. Despite the elegance of Dirac's formalism, it is not surprising that, like a Coulomb electric point charge, a magnetic point charge, or simply
monopole, in the Maxwell theory would render a point singularity in the field and thus carry infinite amount of energy. 
 A landmark development came when 't Hooft \cite{tH} and Polyakov \cite{Pol} found in non-Abelian gauge field theory,
known as the Yang--Mills--Higgs theory, that a singularity-free topological soliton arises as a consequence of the
spontaneously broken symmetry in the vacuum manifold which behaves asymptotically like a Dirac monopole, away from local regions, has concentrated energy near the origin, where the monopole resides, is
smoothly distributed in space, and carries a finite energy. Such a particle-like soliton is commonly referred to as the 't Hooft--Polyakov
monopole and has since then been extensively studied. See \cite{GO,JT,Pre,Ra,Ry,Wein,Ybook,Zee} for some
expository presentations on the subject. Although monopoles remain a hypothetical construct, the concept they offer leads to
many fruitful investigations of various theoretical issues, for example, the quark confinement problem \cite{Gre,SY1,SY2} in light
of a linear confinement mechanism for a monopole and anti-monopole pair immersed in a type-II superconductor \cite{Man1,Man2,Nam,tH2,tH3}. In Deglmann and Kneipp \cite{DK}, a new family of monopole solitons are constructed with a characteristic feature that
they do not belong to the sector of the usual asymptotically electromagnetic Dirac monopole type, thus called `dark monopoles', which
may have some relevance in the study of dark matter/energy. As in the formulation of the 't Hooft--Polyakov monopole problem
\cite{Pol,tH}, mathematically, the existence of such monopoles amounts to solving a two-point boundary value problem with solutions
minimizing a correspondingly reduced radially symmetric energy functional as in
the classical work of Tyupkin, Fateev, and Shvarts \cite{Tyu} using functional analysis. However, the 
method in \cite{Tyu} is only sufficient to allow the acquisition of a weak solution
\cite{Burzlaff} since the functional is not coercive enough for us to recover the full set of boundary properties required for regularity which is not directly implied by finite-energy condition. Technically, one needs to impose boundary conditions, 
imposed both at the point where the monopole resides and at spatial infinity, to carry out a
minimization process. Without preservation of such a full set of boundary conditions
for field configurations over the associated admissible class, there is no ensurance for the attainability of
the energy minimum, thus even the existence of a weak solution as a critical point of the energy functional may become problematic. 
Therefore, it is imperative to tackle the issue of recovering the full set of boundary conditions in the minimization treatment.
Our strategy here is 
to use the regularization method developed in \cite{Y} for a simpler but similar Yang--Mills--Higgs monopole problem \cite{LW} 
with vanishing Higgs potential as in the classical Bogomol'nyi  \cite{Bo} and Prasad--Sommerfield \cite{PS} (BPS) limit \cite{JT,Ra,Sut}.
Minimization problems
of similar technical subtleties arise in other treatments of non-Abelian monopole problems  \cite{Actor,GO,KZ,Pre,Wein}.
Such general applicability and appeal motivate our present mathematical investigation.


Analytically, our regularization method consists of the following steps: We first solve the problem away from the singular point
which permits an approximation of the concerned boundary condition at the singularity. This approximation enables us to realize a monotonicity
property of a minimizing sequence such that the monotonicity property is preserved in the limit as well. Thus we are able to see that the boundary
limit at the singularity would exist as a consequence. We then argue that the limiting value must be the desired one otherwise it would falsify the 
finite-energy condition.  Furthermore, we will use a  BPS solution
of the same form as in the classical studies \cite{Bo,PS} to estimate the monopole energy (mass) away from the BPS phase.
Besides, we will obtain monopole solutions when the coupling parameters satisfy some specific conditions and demonstrate their
applications.

An outline of the content of the rest of the paper is as follows. In Section 2, we review the dark monopole model \cite{DK} briefly, introduce the associated minimization problem, and
state our results. In Section 3, we prove the existence of an energy-minimizing solution in the general setting and establish some
qualitative properties of the solution.  We then comment on a BPS critical phase.
In Section 4, we illustrate how to use the BPS solution to estimate the energy of a non-BPS monopole in the zero Higgs potential
situation. In Section 5, we study the limiting situation when the Higgs coupling constant is set to be infinite, which also appears
in the classical 't Hooft--Polyakov model and is of independent interest. This problem becomes
simpler since it is a single-equation problem, although no solution is explicitly known.
Due to the non-convexity of the energy functional, it is not immediate to see that the
solution is unique. Nevertheless, we are able to establish a uniqueness result and derive some qualitative properties of the solution.

\section{Dark monopole model and existence results}
\setcounter{equation}{0}

Following \cite{DK}, use $\phi$ to denote a scalar field in the adjoint representation of a non-Abelian gauge group $G$ such as
$SU(n)$ ($n\geq5$) and $W_\mu$ ($\mu=0,1,2,3$) a gauge field taking values in the Lie algebra of $G$. In spherically symmetric static limit, the
scalar field and spatial
components of the gauge field are represented in terms of spherical coordinates $(r,\theta,\varphi)$ by the expressions
\bea
\phi(r,\theta,\varphi)&=&vS+\alpha_0 f(r)\sum_{m}Y^*_{lm}(\theta,\varphi)Q_m,\label{x2.1}\\
W_i(r,\theta,\varphi)&=&\frac{(u(r)-1)}{er^2}\epsilon_{ijk} x^j M_k,\label{x2.2}
\eea
where $e,v>0$ are parameters, $S, Q_m, M_k$  appropriate generators of $G$, $u(r)$ and $f(r)$  real-valued profile functions, 
$Y^*_{lm}$  suitable spherical harmonics,
and $\alpha_0>0$ depends on $v$ and $l$
in a specific way. In terms of such representation, the gauge-covariant derivatives of $\phi$ and
the magnetic field induced from $W_i$ are
\bea
D_i\phi&=&\pa_i\phi+\ii e[W_i,\phi]=\alpha_0\left(\frac{f'}r(x^i Y^*_{lm})+fu(\pa_i Y^*_{lm})\right)Q_m,\\
B_i&=&\left(\frac{u'}{er}P^{ik}_T+\frac{u^2-1}{er^2} P_L^{ik}\right)M_k,\quad  P_L^{ik}=\frac{x^i x^k}{r^2},\quad P^{ik}_L+P^{ik}_T=\delta^{ik}.
\eea
With these, the total Yang--Mills--Higgs energy  for the dark monopole model reads \cite{DK}:
\bea\label{1111}
E(\phi,W_i)&=&\int_{\bfR^3}\left\{\frac12 \mbox{Tr}(B_i B_i)+\frac12\mbox{Tr}(D_i\phi D_i\phi)+\frac\lm4\left(\mbox{Tr}(\phi\phi)-v^2\right)^2\right\}\,\dd x\nn\\
&\equiv &\frac{4\pi v}{e}\,I(u,f),
\eea
where $\lm>0$ is the Higgs coupling parameter, and 
\be\label{x26}
I(u,f)=\int_{0}^{\infty}\left\{2(u')^{2}+\frac{(1-u^{2})^{2}}{r^{2}}+
\frac{1}{3|\lambda_{p}|^{2}}\left(r^{2}(f')^{2}+l(l+1)f^{2}u^{2}\right)
+\frac{\lambda}{9e^{2}|\lambda_{p}|^{4}} r^{2}(f^{2}-1)^{2}\right\}\,\dd r,
\ee
with $\lm_p$ a fundamental weight of $G$ and the updated rescaled radial variable $ev r\mapsto r$, which is denoted by $\xi$ in \cite{DK}, so that the associated Euler--Lagrange equations of \eqref{x26} are
\bea
u''&=&\frac{l(l+1)}{6|\lm_p|^2}f^2 u+\frac{u(u^2-1)}{r^2},\label{x27}\\
f''+\frac{2f'}r&=&l(l+1)\frac{fu^2}{r^2}+\frac{2\lm}{3e^2|\lm_p|^2}f(f^2-1),\label{x28}
\eea
subject to the boundary conditions
\be\label{1.2}
f(0)=0,\quad u(0)=1,\quad f(\infty)=1,\quad u(\infty)=0.
\ee

It is clear that all the conditions except the first one, $f(0)=0$, in \eqref{1.2}, are consequences of finite energy.
From \eqref{x2.1}, we see that the first one in \eqref{1.2}, i.e., $f(0)=0$, is required to ensure regularity of the Higgs scalar field
which is not directly imposed by the finiteness of \eqref{x26}. It is this feature that needs to be dealt with care as described 
earlier.

In order to simplify our notation, it will be convenient to use the substitution
\begin{eqnarray}\label{xx210}
\frac{1}{3|\lambda_{p}|^{2}}= \alpha, \quad l(l+1)= \beta,
\quad \frac{\lambda}{3e^{2}|\lm_p|^2} = \frac{\gamma}{2}.
\end{eqnarray}
Thus our existence results regarding dark monopole solitons governed by the boundary-value problem consisting of
 \eqref{x27}--\eqref{1.2}  may be stated as follows.

\begin{theorem}\label{th}
 Consider the differential equations \eqref{x27}--\eqref{x28} subject to the boundary conditions \eqref{1.2}
governing a pair of profile functions $u(r)$ and $f(r)$ describing the spherically symmetric  static Higgs field $\phi$ and gauge field $W_i$ represented by the ansatz stated in \eqref{x2.1}--\eqref{x2.2}.

\begin{enumerate}
\item[(i)] For any coupling and group parameters, there exists a finite-energy solution minimizing the rescaled energy \eqref{x26} which
enjoys the properties $0<u(r),f(r)<1$ for $r>0$ and that $u(r)$ and $1-f(r)$ vanish at infinity exponentially fast, and $1-u(r)$
and $f(r)$ vanish at $r=0$ like power functions, following
some sharp asymptotic estimates in both cases.

\item[(ii)] When $\gamma=0$, $\beta=2$, and $\alpha>0$, the equations are equivalent to a BPS set of first-order equations for solutions with
a finite energy. This system of the BPS equations has a unique solution which depends explicitly on the free parameter $\alpha$
which coincides with the classical BPS monopole solution. In other words, this is a BPS situation.

\item[(iii)] In the non-BPS situation when $\gamma=0$ and $\alpha,\beta>0$ are arbitrary, the equations have an energy-minimizing
solution such that both $u(r)$ and $f(r)$ are monotone functions. Furthermore, the energy of the BPS solution obtained in (ii)
may be used to get some energy estimates which become exact at the critical point $\beta=2$.

\item[(iv)] When $\gamma=\infty$, the reduced governing equation has an energy minimizing solution for any $\alpha,\beta>0$.
Although the solution is not known explicitly,  it is unique and fulfills specific pointwise bounds and its energy estimates 
can be obtained through some concrete computations.

\end{enumerate}
\end{theorem}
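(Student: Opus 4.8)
The plan is to handle the four assertions separately, with the bulk of the analysis concentrated in the existence statement (i) and the other parts exploiting the explicit Bogomol'nyi structure and the two special reductions. For (i) I would set up the admissible class of finite-energy pairs $(u,f)$ carrying the data $u(0)=1$, $u(\infty)=0$, $f(\infty)=1$ together with the regularity constraint $f(0)=0$, and minimize $I$ in \eqref{x26} by the direct method. A preliminary truncation, replacing $u$ by $\min\{1,\max\{0,u\}\}$ and $f$ likewise, lowers every term of $I$ and enforces $0\le u,f\le1$, yielding uniform bounds. The genuine difficulty, flagged already in the introduction, is that $\int r^{2}(f')^{2}$ controls $f$ only weakly near the origin, so the constraint $f(0)=0$ is not closed under weak convergence. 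I would resolve this by the regularization of \cite{Y}: solve first the problem on $[\varepsilon,\infty)$, where the coefficient $r^{2}$ of $(f')^{2}$ is bounded below, producing approximants $(u_{\varepsilon},f_{\varepsilon})$ that are genuine solutions away from the origin and whose monotonicity is preserved in the limit $\varepsilon\to0$, so that the one-sided limit $f(0^{+})$ exists.

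The hard part is precisely to identify this limit, and I expect it to be the main obstacle of the whole theorem. I would argue that if $u(0)=1$ and $f(0^{+})=c$, then inserting $u\to1,\ f\to c$ into \eqref{x28} forces $(r^{2}f')'\to\beta c$ as $r\to0$, so that $f$ would acquire a $\ln r$ (or stronger) singularity unless $c=0$; since the constructed limit is bounded, $c$ must vanish, recovering $f(0)=0$. Interior smoothness then follows from elliptic regularity, the strict inequalities $0<u,f<1$ from the strong maximum principle applied to \eqref{x27}--\eqref{x28}, and the boundary behavior from linearization and an indicial analysis: writing $u=1-w$ near $0$ gives $w''\approx 2w/r^{2}$, hence $1-u\sim r^{2}$, while $f\sim r^{s}$ with $s^{2}+s-\beta=0$ and $\beta=l(l+1)$ gives $s=l$, hence $f\sim r^{l}$; at infinity the linearized equations give exponential decay of $u$ and of $1-f$.

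For (ii) I would perform a Bogomol'nyi completion of the square: with $\gamma=0$ and $\beta=2$ one verifies the identity
\[
I(u,f)=\int_{0}^{\infty}\left\{2\left(u'+\sqrt{\alpha}\,fu\right)^{2}+\alpha\left(rf'-\frac{1-u^{2}}{\sqrt{\alpha}\,r}\right)^{2}\right\}\,\dd r+2\sqrt{\alpha},
\]
the constant coming from $2\sqrt{\alpha}\,[f(1-u^{2})]_{0}^{\infty}$ evaluated on the boundary data. Hence $I\ge2\sqrt{\alpha}$ with equality exactly for the first-order system $u'=-\sqrt{\alpha}\,fu,\ r^{2}f'=(1-u^{2})/\sqrt{\alpha}$, which integrates explicitly to
\[
u(r)=\frac{\sqrt{\alpha}\,r}{\sinh(\sqrt{\alpha}\,r)},\qquad f(r)=\coth(\sqrt{\alpha}\,r)-\frac{1}{\sqrt{\alpha}\,r},
\]
the classical BPS monopole rescaled by $\sqrt{\alpha}$, with uniqueness immediate from the first-order nature of the system. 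For (iii), existence at $\gamma=0$ is as in (i); the monotonicity I would read off the equations directly. Since $\gamma=0$, $(r^{2}f')'=\beta fu^{2}\ge0$ forces $r^{2}f'>0$, so $f$ increases; and if $u$ had an interior minimum at $r_{0}$ followed by a maximum at $r_{1}>r_{0}$, evaluating \eqref{x27} at both points (using that $f$ increases and $u(r_{1})>u(r_{0})$) yields
\[
\frac{\alpha\beta}{2}f(r_{1})^{2}\le\frac{1-u(r_{1})^{2}}{r_{1}^{2}}<\frac{1-u(r_{0})^{2}}{r_{0}^{2}}\le\frac{\alpha\beta}{2}f(r_{0})^{2}<\frac{\alpha\beta}{2}f(r_{1})^{2},
\]
a contradiction, so $u$ decreases. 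The estimate follows from $I_{\beta}(u,f)=I_{2}(u,f)+\alpha(\beta-2)\int_{0}^{\infty}f^{2}u^{2}\,\dd r$ combined with the bound $I_{2}\ge2\sqrt{\alpha}$ from (ii): testing with the BPS pair produces two-sided bounds that pinch to $2\sqrt{\alpha}$ at $\beta=2$.

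For (iv) the limit $\gamma\to\infty$ forces $f\equiv1$, reducing $I$ to $I_{\infty}(u)=\int_{0}^{\infty}\{2(u')^{2}+(1-u^{2})^{2}/r^{2}+\alpha\beta u^{2}\}\,\dd r$ with Euler--Lagrange equation $u''=\tfrac{\alpha\beta}{2}u+u(u^{2}-1)/r^{2}$ under $u(0)=1,\ u(\infty)=0$. Existence is again by direct minimization with truncation to $0\le u\le1$. For the uniqueness, which is not obvious because $I_{\infty}$ is non-convex in $u$, my plan is the substitution $\eta=u^{2}$: the integrand becomes $\tfrac{(\eta')^{2}}{2\eta}+\tfrac{(1-\eta)^{2}}{r^{2}}+\alpha\beta\eta$, whose first term is the jointly convex perspective of $t\mapsto t^{2}$, so $I_{\infty}$ is convex on the convex admissible set for $\eta$ and its critical point is the unique minimizer, giving uniqueness of $u=\sqrt{\eta}$. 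Pointwise bounds follow by comparison: from $u''\le\tfrac{\alpha\beta}{2}u$ and $u''\ge\tfrac{\alpha\beta}{2}u-u/r^{2}$ one sandwiches $u$ between exponentials of rate $\sqrt{\alpha\beta/2}$ up to controlled corrections, and inserting these bounds together with explicit trial functions into $I_{\infty}$ yields the quoted energy estimates.
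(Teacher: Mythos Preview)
Your outline is correct and, for parts (i), (ii) and the energy estimates in (iii) and (iv), essentially reproduces the paper's argument: the same regularization on $[\varepsilon,\infty)$ to recover $f(0)=0$ via the logarithmic blow-up alternative, the same Bogomol'nyi completion yielding the Prasad--Sommerfield profile, and the same decomposition $I_{\beta}=I_{2}+\alpha(\beta-2)\int f^{2}u^{2}$ tested against the BPS pair.

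You diverge from the paper in two places, and in both your route is at least as clean. For the monotonicity of $u$ in (iii), the paper argues variationally: if $u$ rises on some interval it replaces $u$ by a constant there and shows, using \eqref{125}, that the energy strictly drops, contradicting minimality. Your pointwise argument, evaluating \eqref{x27} at a local minimum $r_{0}$ and a subsequent local maximum $r_{1}$ and chaining the resulting inequalities through the monotonicity of $f$ and of $(1-u^{2})/r^{2}$, is shorter and has the bonus that it applies to \emph{any} solution with $0<u,f<1$ and $f$ increasing, not just energy minimizers. For the uniqueness in (iv), the paper uses a Wronskian-type identity: with $w=u_{1}-u_{2}$ it integrates $(w/u_{1})(w'u_{1}-wu_{1}')'$ and obtains \eqref{0222}, forcing $w\equiv0$. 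Your substitution $\eta=u^{2}$, which turns $2(u')^{2}$ into the jointly convex perspective $(\eta')^{2}/(2\eta)$ and leaves $(1-\eta)^{2}/r^{2}$ strictly convex in $\eta$, makes the functional convex so that every positive critical point is the unique minimizer; this is more conceptual and avoids the asymptotic bookkeeping \eqref{x63}--\eqref{x68} the paper needs to justify its boundary terms. One small caveat: the perspective term alone is convex but not strictly convex (it is linear along rays), so you should make explicit that the strict convexity you need comes from the $(1-\eta)^{2}/r^{2}$ term.
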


In the subsequent sections, we prove various parts of the theorem.\footnote{In fact (ii) is covered in the
 classical study of the $SU(2)$ Yang--Mills--Higgs equations and the results stated are well known. We will comment on this
link at the end of Section 3.
The inclusion of (ii) in the theorem mainly serves to put the other results in perspective. } In doing so, we develop our methods for minimization of
energy, construction of energy-minimizing solutions, realization of asymptotic behavior, and energy estimation. Moreover, we
will  present and comment on the mathematical details of the results stated in the theorem.

\section{Solutions to equations of motion by regularized minimization}\label{s3}
\setcounter{equation}{0}\setcounter{remark}{0}
\setcounter{theorem}{0}\setcounter{remark}{0}
\setcounter{lemma}{0}\setcounter{remark}{0}

In terms of the suppressed parameters given in \eqref{xx210}, the energy functional \eqref{x26} becomes
\begin{eqnarray}\label{111}\
I(u,f)=\int_{0}^{\infty} \bigg\{2(u')^{2}+\frac{(1-u^{2})^{2}}{r^{2}}+
\alpha[ r^{2}(f')^{2}+\beta f^{2}u^{2}]
+\frac{\alpha\gamma}{2}r^{2}(f^{2}-1)^{2}\bigg\}\,{\mathrm{d}}r.
\end{eqnarray}
The equations \eqref{x27}--\eqref{x28}, or the Euler--Lagrange equations associated with \eqref{111}, are
\begin{eqnarray}\label{112}
u''&=&\frac{\alpha\beta}{2}f^{2}u+\frac1{r^{2}}\, {u(u^{2}-1)} \\ \label{113}
f''&=& -\frac{2}{r}f'+\frac\beta{r^{2}}\,{fu^{2}}+ \gamma f(f^{2}-1),
\end{eqnarray}
subject to the boundary conditions stated in \eqref{1.2}.
For our purpose, we shall obtain solutions to (\ref{112})--(\ref{113}) subject to (\ref{1.2}) as an energy-minimizing configuration
of the functional (\ref{111}). To this end,
set
\begin{eqnarray}\label{114}
\eta_0=\inf \{I(u,f)|(u,f)\in X\},
\end{eqnarray}
where the admissible set $X$ is defined to be
\berr
X&=&\{(u,f)\,| \,I(u,f)<\infty, \mbox{ and the functions }  u,f \mbox{ are absolutely continuous on any}\\
&&\mbox{ compact subinterval of }
 (0, \infty) \mbox{
 and satisfy  \eqref{1.2}} \}.
\eerr

First, we note that the structure of the functional (\ref{111}) indicates that we may always modify $(u,f)$ in $X$  if necessary,
 to lower the energy, to achieve the property
\be
0\leq u\leq 1,\quad 0\leq f\leq1. 
\ee
This property will be observed  in our minimization study to follow.
 
Next, let $\{(u_{n},f_{n})\}$  be a minimizing sequence of (\ref{114}). Then, for any pair of numbers $0 < a < b < \infty$,
$\{(u_{n},f_{n})\}$ is a bounded sequence in
the Sobolev space $W^{1,2}(a,b)$.  By a diagonal subsequence argument, we obtain the existence of a pair $u,f \in W^{1,2}_{\mathrm{loc}}(0,\infty)$, so that $I(u,f) < \infty$ and, by choosing a suitable subsequence if necessary, we may assume
without loss of generality $u_{n} \rightarrow u$, $f_{n} \rightarrow f(n \rightarrow \infty)$ weakly in $W^{1,2}(a,b)$ and strongly in $C[a,b]$ for any $0<a<b<\infty$. In fact, to achieve this, we may proceed as follows: For
each $m=1,2,\dots$, consider the sequence$\{(u_{n},f_{n})\}$ on $(a,b)$ with $a=a_m=\frac1m$ and $b=b_m=1+m$ in the way that we
extract a subsequence, $\{(u_{1,n},f_{1,n})\}$, from $\{(u_{n},f_{n})\}$,  which is weakly convergent in $W^{1,2}(a_1,b_1)$;
similarly, we extract a subsequence, $\{(u_{2,n},f_{2,n})\}$, from $\{(u_{1,n},f_{1,n})\}$, which is
weakly convergent in $W^{1,2}(a_2,b_2)$; inductively, we extract a subsequence $\{(u_{m,n},f_{m,n})\}$,
from $\{(u_{m-1,n},f_{m-1,n})\}$, $m\geq 2$, which is weakly convergent in $W^{1,2}(a_m,b_m)$. Hence the diagonal
subsequence $\{(u_{n,n},f_{n,n})\}$ is weakly convergent in $W^{1,2}(a_m,b_m)$ for any $m\geq1$ which establishes that
$\{(u_{n,n},f_{n,n})\}$ is weakly convergent in $W^{1,2}(a,b)$ for any $0<a<b<\infty$ as asserted.
We are yet to show that $(u,f)$ lies in $ X$. 
In other words, we need to verify the boundary condition \eqref{1.2}. To  do so, we note that the tricky, and perhaps the
most unnatural or indirect, part of (\ref{1.2}) is $f(0)=0$, which will be detailed later, and other parts are rather straightforward to see \cite{Tyu}.
For example, assuming $I(u_n,f_n)\leq \eta_0+1$ ($\forall n$), we have the uniform estimate
\be
|f_{n}(r)-1|\leq\int_{r}^{\infty}|f'_{n}(\rho)|\,\dd\rho 
\leq
 \bigg(\int_{r}^{\infty}|\rho^{2}(f'_{n}(\rho))^{2}|\,\dd\rho\bigg)^{\frac{1}{2}}
\bigg(\int_{r}^{\infty}\frac{\dd\rho}{\rho^{2}}\bigg)^{\frac{1}{2}} \leq \left(\frac{\eta_0+1}\alpha\right)^{\frac12}r^{-\frac12},\label{3}
\ee
showing that $f_{n}(r)\rightarrow 1$ as $r\rightarrow\infty$ uniformly. As a consequence, $\{u_n\}$ is a bounded sequence
in $W^{1,2}(0,\infty)$. These properties readily establish $f(\infty)=1,u(0)=1, u(\infty)=0$.
So it now remains  to establish $f(0)=0$. To this end, we proceed as follows by a regularization approach \cite{Y}.

 \begin{lemma} \label{lemma 2}
 Let $\{(u_{n},f_{n})\}$ be a minimizing sequence of \eqref{114}. Then we can modify the sequence $\{f_{n}\}$ so that it solves
the boundary-value problem
 \be\label{117}
f_{n}''=-\frac{2}{r}f_{n}'+\frac\beta{r^{2}}{f_{n}u_{n}^{2}}+\gamma f_{n}(f_{n}^{2}-1),\quad \frac1n<r<\infty;\quad f_n\left(
\frac1n\right)=0,\quad f_n(\infty)=1.
\ee
\end{lemma}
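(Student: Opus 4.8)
The plan is to regularize by freezing the gauge component $u_n$ and re-constructing $f_n$ as the minimizer of the $f$-part of the energy on the truncated half-line $(1/n,\infty)$, carrying the artificial Dirichlet datum $f(1/n)=0$. Concretely, for each fixed $n$ I would hold $u_n$ fixed and consider only the terms of \eqref{111} involving $f$, namely
\be
J_n(f)=\int_{1/n}^{\infty}\Big\{\alpha r^2(f')^2+\alpha\beta f^2u_n^2+\tfrac{\alpha\gamma}{2}r^2(f^2-1)^2\Big\}\,\dd r,
\ee
minimized over $\mathcal A_n=\{f\in W^{1,2}_{\mathrm{loc}}(1/n,\infty)\,:\,J_n(f)<\infty,\ f(1/n)=0,\ f(\infty)=1\}$. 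Extending the minimizer by $f\equiv0$ on $(0,1/n)$ produces the modified profile $\hat f_n$, which by construction obeys $\hat f_n(0)=0$: this builds in, at finite $n$, precisely the boundary value we are unable to recover directly in the limit.

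First I would check that $\mathcal A_n$ is nonempty (a cutoff of $f_n$ does the job, see below) and that $J_n\ge0$, so the infimum is finite. Along a minimizing sequence for $J_n$ the term $\int r^2(f')^2$ stays bounded, which as in \eqref{3} gives the uniform bound $|f(r)-1|\le Cr^{-1/2}$ and hence preserves $f(\infty)=1$ in the limit; on each compact $[1/n,b]$ the same bound yields weak $W^{1,2}$ and strong $C$ precompactness, and since $1/n$ is bounded away from the origin the endpoint value $f(1/n)=0$ survives uniform convergence. Convexity in $f'$ gives weak lower semicontinuity of the gradient term, while the potential terms pass to the limit by Fatou using pointwise convergence; a minimizer $\hat f_n\in\mathcal A_n$ therefore exists. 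Replacing $f$ by $\min\{\max\{f,0\},1\}$ does not raise $J_n$, so $0\le\hat f_n\le1$, and interior elliptic regularity makes $\hat f_n$ smooth and shows it solves the Euler--Lagrange equation of $J_n$, which is exactly \eqref{117}.

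It then remains to verify that passing from $f_n$ to $\hat f_n$ keeps $\{(u_n,\hat f_n)\}$ a minimizing sequence of the full problem \eqref{114}. Since only the $f$-component is changed and $\hat f_n(0)=0$, one checks $(u_n,\hat f_n)\in X$, whence $I(u_n,\hat f_n)\ge\eta_0$; the reverse bound is the real content. On $(0,1/n)$ the profile $\hat f_n\equiv0$ contributes only $\int_0^{1/n}\tfrac{\alpha\gamma}{2}r^2\,\dd r=O(n^{-3})$. On $(1/n,\infty)$ I would test $J_n$ against the competitor $g_n$ obtained from $f_n$ by inserting a linear transition layer on $[1/n,2/n]$ that drives the value down to $0$ at $r=1/n$ and agrees with $f_n$ for $r\ge2/n$; using the a priori bounds $0\le f_n,u_n\le1$, every excess term generated by this layer over an interval of width $\sim 1/n$ at radius $\sim1/n$ is $O(n^{-1})$, so $J_n(\hat f_n)\le J_n(g_n)\le J_n(f_n)+O(n^{-1})$. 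Discarding the nonnegative $f$-contribution of $f_n$ on $(0,1/n)$ and combining the estimates yields $I(u_n,\hat f_n)\le I(u_n,f_n)+o(1)\to\eta_0$, as required.

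I expect the existence and lower-semicontinuity step to be essentially routine. The main obstacle is the last comparison, i.e. showing that the energetic cost of imposing the artificial condition $f(1/n)=0$ vanishes as $n\to\infty$; this relies crucially on the uniform bound $0\le f_n\le1$ (already available for the minimizing sequence), which is what makes the transition layer cheap, and on arranging the layer so that the gradient, coupling, and potential excesses are all genuinely $o(1)$.
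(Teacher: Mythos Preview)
Your proposal is correct and follows essentially the same strategy as the paper: first use a multiplicative cutoff on $[1/n,2/n]$ to show that forcing $f_n(1/n)=0$ costs only $o(1)$ in energy (the paper's explicit computation gives $O(n^{-1/2})$ rather than your stated $O(n^{-1})$, owing to the cross term, but either suffices), then minimize the $f$-part of the energy over the class $F_n$ with $u_n$ frozen to produce the new $f_n$ solving \eqref{117}. Your write-up is, if anything, slightly more explicit than the paper's in spelling out why the modified sequence remains minimizing.
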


\begin{proof}
 Let $\{(u_{n},f_{n})\}$  be a minimizing sequence of \eqref{114} satisfying $I(u_{n},f_{n}) \leq \eta_0 +1$ ($\forall n$), say.
Introduce the cut-off function
\be\label{119}
\xi_{n}(r)= 0, \quad r\leq\frac{1}{n};\quad
   \xi_{n}(r)= 1,\quad r\geq\frac{2}{n};\quad
 \xi_{n}(r)=   nr-1,\quad\frac{1}{n}< r<\frac{2}{n}.
\ee
Then, using $0\leq f_n\leq1$ and the Schwarz inequality, we have
\begin{eqnarray}\nonumber
\int_{0}^{\infty}r^{2}((\xi_{n}f_{n})')^{2}\,\dd r&=&
\int_{\frac{1}{n}}^{\frac{2}{n}}r^{2}\left(n^{2}f_{n}^{2}+(nr-1)^{2}(f'_{n})^{2}+2n(nr-1)f_{n}f'_{n}\right)\,\dd r
+\int_{\frac{2}{n}}^{\infty}r^{2}(f'_{n})^{2}\,\dd r \\ \nonumber
&\leq& \int_{\frac{1}{n}}^{\frac{2}{n}}\left( r^{2}n^{2}+r^{2}(f_{n}')^{2}+2nr^{2}|f'_{n}|\right)\,\dd r
+\int_{\frac{2}{n}}^{\infty}r^{2}(f'_{n})^{2}\,\dd r \\ \nonumber
&\leq&\frac{7}{3n}+2n\bigg(\int_{\frac{1}{n}}^{\frac{2}{n}}r^{2}\,\dd r\bigg)^{\frac{1}{2}}\bigg
(\int_{\frac{1}{n}}^{\frac{2}{n}}r^{2}(f_{n}')^{2}\,\dd r\bigg)^{\frac{1}{2}}
+\int_{\frac{1}{n}}^{\infty}r^{2}(f'_{n})^{2}\,\dd  r \\ \nonumber
&\leq&\frac{7}{3n}+2\left(\frac7{3n}\right)^{\frac12}\left(\int_{0}^{\infty}r^{2}(f'_{n})^{2}\,\dd r \right)^{\frac12}
+\int_{0}^{\infty}r^{2}(f'_{n})^{2}\,\dd r.
\end{eqnarray}
Other terms are easily controlled. Hence we obtain 
\begin{eqnarray}\label{1.11}
\eta_0=\lim_{n\rightarrow\infty}I(u_{n},f_{n}) = \lim_{n\rightarrow\infty}I(u_{n},\xi_{n}f_{n}).
\end{eqnarray}
In other words, $\{(u_{n},\xi_{n}f_{n})\}$ is also a minimizing sequence. That is,  we are allowed to assume that $f_{n}$ satisfies  the truncated condition $f_{n}(r)=0$ for $r\leq \frac{1}{n}$, which is seen to be regularized since the singularity of \eqref{111}
is at $r=0$.

Given $n=1,2\dots$, consider the problem
\begin{eqnarray}\label{121}
\eta_{n}=\min\{I(u_{n},f)|f\in F_{n}\},
\end{eqnarray}
where
\ber
F_n=&&\left\{f \big|  f \mbox{ is absolutely continuous on any compact subinterval of $(0,\infty)$,}\right.\nn\\
&&\left.\mbox{
 $f(r) = 0$  for  $r < \frac{1}{n}$, and $f(\infty) = 1$}\right\}.\nn
\eer

Let $\{f^m\}$ be a minimizing sequence of \eqref{121}. As before, we can assume that
$ 0 \leq f^m \leq 1$. Thus for any $\frac1n<c< \infty$, the sequence $\{f^m\}$ is bounded in $W^{1,2}\left(\frac1n,c\right)$. A diagonal subsequence argument shows that there is a subsequence, which we still denote by $\{f^m\}$,  and there is an element $f_{n}\in W_{\mbox{\small loc}}^{1,2}\left(\frac1n,\infty\right)$  (say), so that $f^m\rightarrow f_{n}$ $(m\rightarrow \infty)$
weakly in  $W^{1,2}\left(\frac1n, c\right)$ ($\forall c>\frac1n$). It is clear that $f_n$ solves (\ref{121}). So $f_n$ solves \eqref{117}
as well.
\end{proof}

\begin{lemma} \label{lemma 3} Let the pair $(u,f)$ be as obtained earlier as the limit of an appropriately
chosen minimizing sequence $\{(u_n,f_n)\}$ of the problem \eqref{114} constructed in Lemma \ref{lemma 2}. Then $f$ fulfills the desired boundary condition
\begin{eqnarray}\label{4}
\lim_{r\rightarrow 0} f(r)=0.
\end{eqnarray}
\end{lemma}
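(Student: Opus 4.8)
The plan is to exploit the fact that the limit $f$ is a bounded classical solution of the singular equation \eqref{113} on the whole open half-line, and to read off $\lim_{r\to0}f(r)=0$ from the structure of that equation near the origin; the regularization of Lemma \ref{lemma 2} enters only to guarantee that the minimizing sequence converges to such a solution. First I would record that, since each $f_n$ solves \eqref{117} on $\left(\frac1n,\infty\right)$ and $f_n\to f$, $u_n\to u$ strongly in $C[a,b]$ and weakly in $W^{1,2}(a,b)$ for every $0<a<b<\infty$, testing the weak form of \eqref{117} against functions supported in $(a,b)$ and letting $n\to\infty$ (the nonlinear terms converge uniformly, the gradient term by weak convergence) shows that $f$ is a weak, hence by standard ODE regularity a classical, solution of \eqref{113} on $(0,\infty)$, with $0\le f\le1$ and with $u$ continuous and $\lim_{r\to0}u(r)=1$ already known.

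Introduce the ``flux'' $P(r)=r^{2}f'(r)$, so that \eqref{113} reads $P'(r)=\beta u^{2}f+\gamma r^{2}f(f^{2}-1)$. Because $0\le u,f\le1$, the right-hand side is bounded on any interval $(0,\delta)$, so $P$ is Lipschitz there and extends continuously to $r=0$; call its value $P_{0}$. If $P_{0}\ne0$ then $f'(r)\sim P_{0}r^{-2}$ and $f$ would behave like a nonzero multiple of $r^{-1}$ near the origin, contradicting $0\le f\le1$. Hence $P_{0}=0$ and $P(r)=\int_{0}^{r}\big(\beta u^{2}f+\gamma s^{2}f(f^{2}-1)\big)\,\dd s$.

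The monotonicity alluded to in the introduction now appears directly at the level of the limit. Using $\lim_{r\to0}u=1$, fix $\delta>0$ so small that $u^{2}\ge\frac12$ and $\gamma\delta^{2}<\frac\beta4$ on $(0,\delta)$; then the integrand equals $g(s)=f\,[\beta u^{2}+\gamma s^{2}(f^{2}-1)]\ge\frac\beta4 f\ge0$, so $P\ge0$ and therefore $f'\ge0$ on $(0,\delta)$. Consequently $f$ is non-decreasing near the origin, the limit $f_{0}=\lim_{r\to0}f(r)$ exists in $[0,1]$, and $\int_{0}^{\delta}f'\,\dd r=f(\delta)-f_{0}\le1$.

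Finally I would rule out $f_{0}>0$ by contradiction. If $f_{0}>0$, then $f\ge f_{0}$ on $(0,\delta)$, so the lower bound improves to $P(r)=\int_{0}^{r}g\,\dd s\ge\frac{\beta f_{0}}{4}r=:c_{0}r$, giving $f'(r)\ge c_{0}/r$; integrating from $r$ to $\delta$ yields $f(\delta)-f(r)\ge c_{0}\log(\delta/r)\to\infty$ as $r\to0$, which is incompatible with $0\le f\le1$. Hence $f_{0}=0$, that is, \eqref{4} holds. The step I expect to be most delicate is the passage to the singular endpoint: justifying that the limit is a bona fide solution near $r=0$ and that $\lim_{r\to0}r^{2}f'(r)=0$, since the coefficient $r^{-2}$ in \eqref{113} degenerates precisely there. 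The alternative route, first proving that the regularized minimizers $f_{n}$ themselves are non-decreasing on $\left(\frac1n,\infty\right)$ and then passing to the limit, trades this for the comparable task of establishing $f_{n}'\ge0$.
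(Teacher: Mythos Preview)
Your proof is correct and follows essentially the same route as the paper: pass to the limit in the regularized equations to see that $f$ solves \eqref{113} on $(0,\infty)$, show $r^{2}f'(r)\to0$ as $r\to0$, deduce the integral representation $r^{2}f'(r)=\int_{0}^{r}\big(\beta u^{2}f+\gamma s^{2}f(f^{2}-1)\big)\,\dd s$, conclude monotonicity of $f$ near the origin so that $f_{0}=\lim_{r\to0}f(r)$ exists, and rule out $f_{0}>0$ via the logarithmic divergence of $\int f'\,\dd r$. The only cosmetic differences are that the paper obtains $\liminf_{r\to0}r^{2}|f'(r)|=0$ from the finite-energy bound $\int r^{2}(f')^{2}\,\dd r<\infty$ (rather than your Lipschitz argument for $P$) and reaches the logarithmic blow-up through L'H\^{o}pital's rule applied to $rf'(r)$ (rather than your direct integral lower bound $P(r)\ge c_{0}r$).
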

\begin{proof}
Assume that $f_{n}$ satisfies \eqref{117}. Taking $n\rightarrow \infty$ in any
interval $[a,b]$ with $0<a<b<\infty$ and iterating convergence from lower- to higher-order derivatives of the sequence, we see that 
the limit $f$ solves the equation
\begin{eqnarray}\label{122}
(r^{2}f')'= \beta fu^{2}+\gamma r^{2}f(f^{2}-1),\quad r>0.
\end{eqnarray}

We claim
\begin{eqnarray}\label{123}
\lim_{r\rightarrow 0}\inf r^{2}|f'(r)|= 0.
\end{eqnarray}
In fact, if \eqref{123} is false, there are constants $\delta>0$ and $\varepsilon_{0}>0$
such that
\begin{eqnarray}\label{x116}
r^{2}|f'(r)|>\varepsilon_{0},\quad 0<r<\delta.
\end{eqnarray}
Therefore, in view of \eqref{x116}, for any $0<r_0<\delta$, we have
\begin{eqnarray}
\int_{r_{0}}^{\delta}r^{2}(f'(r))^{2}\,\dd r>
\varepsilon_{0}^{2}\left(\frac{1}{r_{0}}-\frac{1}{\delta}\right),
\end{eqnarray}
which diverges as $r_{0}\rightarrow 0$,  contradicting the convergence of the integral $\int_{0}^{\infty}r^{2}(f'(r))^{2}\,\dd r$.

Using \eqref{123}, we obtain after integrating \eqref{122} that
\begin{eqnarray}\label{124}
r^{2}f'(r)=\int_{0}^{r}\beta f(\rho)u^{2}(\rho)\,\dd\rho
+\int_{0}^{r} \gamma \rho^{2}f(\rho)(f^{2}(\rho)-1)\,\dd\rho.
\end{eqnarray}
Using $0\leq f\leq1$ and $u(0)=1$ in \eqref{124}, we see that
 that $f'(r)\geq0 $ when $r>0$ is small. In particular, the monotonicity of $f(r)$ holds for $r>0$ small. Consequently, there is a number $f_{0}\geq 0$ such that 
\be\label{x119}
\lim_{r\rightarrow 0}f(r)=f_{0}.
\ee
Moreover, \eqref{124} implies
\be\label{x120}
\lim_{r\to0} r^2 f'(r)=0.
\ee
With such preparation, we are now ready to prove $f_0=0$ in \eqref{x119}.

In fact,  in view of (\ref{x120}), we can apply the L'H\^{o}pital's rule to deduce the result
\begin{eqnarray} 
\lim_{r\rightarrow 0}rf'(r)=\lim_{r\rightarrow 0}\frac{r^{2}f'(r)}{r}=\lim_{r\to0}\left(r^2 f'(r)\right)'
=\beta f_{0},
\end{eqnarray}
where we have inserted \eqref{122} to get the right-hand-side quantity of the above. 
Hence,  if $f_{0}> 0$ in \eqref{x119}, then there are constants $\delta>0$ and $\vep_{0}>0$ such that
\begin{eqnarray}\label{131}
   rf'(r)\geq \vep_{0}, \quad 0<r<\delta.
\end{eqnarray}
Integrating \eqref{131}, we obtain
\begin{eqnarray}
 f(r_{2})- f(r_{1})\geq\vep_{0}\ln\frac{r_{2}}{r_{1}},\quad r_{1},r_{2}\in(0,\delta),
\end{eqnarray}
which contradicts the existence of limit stated in \eqref{x119}.
Therefore the lemma follows. 
\end{proof}

In conclusion, we have $(u,f)\in X$ which solves \eqref{114}. Thus  $(u,f)$
is a least-energy solution of \eqref{112}--\eqref{113} subject to the boundary condition \eqref{1.2}.

We now present some properties of the energy-minimizing solution obtained above. 

\begin{lemma} \label{101}
The least-energy solution $(u,f)$ to \eqref{112}--\eqref{113} subject to the boundary condition \eqref{1.2}
enjoys the properties $0<u(r)<1, 0<f(r)<1$, for any $r>0$, and
\bea\label{x123}
u(r)&=&\mbox{\rm O}\left(\e^{-\sqrt{\frac{\alpha\beta}2}(1-\varepsilon)r}\right),\,
f(r)=1+\mbox{\rm O}\left(\e^{-\sqrt{2}\min\{\sqrt{\gamma},\sqrt{\alpha\beta}\}(1-\varepsilon)r}\right) (\gamma>0),\, r\rightarrow\infty,\\
u(r)&=&1+\mbox{\rm O} (r^{2(1-\vep)}),\quad f(r)=\mbox{\rm O}\left(r^{\left(\sqrt{\frac14+\beta}-\frac12\right)(1-\vep)}\right),\quad r\to0,\label{323}
\eea
 where  $\varepsilon\in(0,1)$ may be taken to be arbitrarily small. When $\gamma=0$, the estimate for $f(r)$ in \eqref{x123} is adjusted to
\be\label{ff}
f(r)=1+\mbox{\rm O}\left(\frac1r\right),\quad r\to\infty.
\ee
\end{lemma}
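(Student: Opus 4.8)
The plan is to derive each stated bound by comparison-function (super/subsolution) arguments adapted to the two singular ends $r\to 0$ and $r\to\infty$, after first upgrading the already-known inequalities $0\le u,f\le 1$ to strict ones. The ordering matters: strict positivity is needed to run the maximum principle, and the decay of $u$ must be settled before that of $f$, since $f$ is forced by a term involving $u^2$.

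For the strict bounds I would argue pointwise on $(0,\infty)$, where \eqref{112}--\eqref{113} are regular. If $u(r_0)=0$ for some $r_0>0$ then, since $u\ge 0$, the point $r_0$ is an interior minimum, so $u'(r_0)=0$; as \eqref{112} forces $u''(r_0)=0$ as well, ODE uniqueness yields $u\equiv 0$, contradicting $u(0)=1$, so $u>0$. Writing $w=1-u\ge 0$ and using $1-u^2=w(1+u)$, equation \eqref{112} becomes $w''=\frac{u(1+u)}{r^2}w-\frac{\alpha\beta}{2}f^2u$, hence $w''\le \frac{u(1+u)}{r^2}w$; the strong maximum principle then shows that a zero of $w$ at an interior point forces $w\equiv 0$, contradicting $w(\infty)=1$, so $u<1$. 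The same devices give $f>0$ (from $f(\infty)=1$) and $f<1$: at a hypothetical interior point where $f=1$ one has $(r^2f')'\le 0$, while the right side of \eqref{122} there equals $\beta u^2>0$, a contradiction.

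Next I would treat $r\to\infty$. Since $f\to1$ and $u\to0$, the coefficient $q(r)=\frac{\alpha\beta}{2}f^2+\frac{u^2-1}{r^2}$ in $u''=q(r)u$ tends to $\frac{\alpha\beta}{2}$; fixing $\varepsilon$ and choosing $R$ with $q(r)\ge\mu^2$ for $\mu=\sqrt{\alpha\beta/2}\,(1-\varepsilon)$ and $r>R$, comparison of $u$ with $Ce^{-\mu r}$ gives the claimed rate. For $f$ with $\gamma>0$ I would pass to $v=1-f$, which satisfies $v''+\frac{2}{r}v'=\gamma(1-v)(2-v)v-\beta\frac{(1-v)u^2}{r^2}$; the homogeneous part decays like $e^{-\sqrt{2\gamma}\,r}$, while the inhomogeneity is controlled by $u^2=\mbox{O}(e^{-\sqrt{2\alpha\beta}(1-\varepsilon)r})$, so a supersolution $Ae^{-\mu r}$ with $\mu=\sqrt{2}\min\{\sqrt{\gamma},\sqrt{\alpha\beta}\}(1-\varepsilon)$ dominates both contributions and yields the min-rate. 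When $\gamma=0$ the equation reduces to $(r^2f')'=\beta fu^2$, so $r^2f'$ increases to a finite positive limit $C_\infty$ (positivity forced by $f(0)=0<1=f(\infty)$), whence $f'(r)\sim C_\infty/r^2$ and $1-f=\mbox{O}(1/r)$. For $r\to0$ I would use Euler-type comparisons: with $w=1-u$ the leading balance in \eqref{112} is $w''=\frac{2}{r^2}w$, whose indicial equation $\sigma(\sigma-1)=2$ selects $\sigma=2$, and comparing $w$ against $Cr^{2(1-\varepsilon)}$ (absorbing the subdominant forcing $\frac{\alpha\beta}{2}f^2u$) gives $u=1+\mbox{O}(r^{2(1-\varepsilon)})$; likewise the dominant balance of \eqref{122} is $f''+\frac{2}{r}f'=\frac{\beta}{r^2}f$ with indicial equation $s^2+s-\beta=0$ and positive root $s=\sqrt{\tfrac14+\beta}-\tfrac12$, so choosing the exponent $q=(\sqrt{\tfrac14+\beta}-\tfrac12)(1-\varepsilon)<s$ makes $Cr^q$ a supersolution (since then $q(q+1)<\beta$ dominates the lower-order terms) and produces the stated power bound.

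The main obstacle is the $\gamma>0$ decay of $f$ at infinity: unlike the other estimates, which reduce to a single constant-coefficient or single-power balance, here the rate is genuinely the minimum of two competing mechanisms — the intrinsic linearized rate $\sqrt{2\gamma}$ and the rate $\sqrt{2\alpha\beta}$ transmitted through the quadratic source $\beta fu^2/r^2$ from the previously established decay of $u$. Constructing a single supersolution that simultaneously dominates the homogeneous decay and the exponentially small but sign-indefinite forcing, and verifying the differential inequality uniformly for large $r$, is the delicate step; the ordering of the arguments is precisely what makes this control available.
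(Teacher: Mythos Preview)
Your proposal is correct and follows essentially the same route as the paper: strict inequalities via equilibrium/maximum-principle arguments, then comparison with exponential supersolutions at infinity (first for $u$, then feeding the $u$-decay into the $f$-equation to produce the $\min$-rate) and with Euler-type power functions at the origin, and integration of $(r^2f')'=\beta fu^2$ for the $\gamma=0$ case. The only cosmetic differences are that the paper dispatches $u<1$ and $f<1$ by a direct point contradiction (evaluating the equation at a putative interior maximum) rather than the strong maximum principle on $w=1-u$, and it works with $F+f-1$ rather than $v=1-f$ for the $\gamma>0$ estimate; the substance is identical.
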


\begin{proof}
 Let $(u,f)$ be the energy-minimizing solution obtained.
Then
 $0\leq u(r)\leq 1, 0\leq f(r)\leq 1, r>0$.
Since $u=0$ and $f=0$ are equilibria of \eqref{112} and \eqref{113}, respectively, so they are not attainable at finite $r$
in view of the uniqueness theorem for the initial value problem of an ordinary differential equation. In other words,
$0<u(r)\leq 1$ and $ 0<f(r)\leq 1$, $r>0$. If there is a point $r_0>0$ such that $u(r_0)=1$, then $u'(r_0)=0$ and $u''(r_0)\leq0$.
Inserting these into \eqref{112}, we arrive at a contradiction since $f(r_0)>0$. So $u(r)<1$ for all $r>0$. Similarly, $f(r)<1$ for all
$r>0$ as well.

Furthermore, the asymptotic estimates stated in \eqref{x123}--\eqref{323} may be seen from analyzing the equations \eqref{112} and 
\eqref{113} rewritten in forms
\be\label{324}
u''=a(r) u,\quad f''+\frac2r f'=b(r) (f-1)+\frac{\beta}{r^2}fu^2,\quad r>0,
\ee
 for large $r>0$,
where the coefficients $a(r)$ and $b(r)$ satisfy the properties
\be\label{325}
\lim_{r\to\infty} a(r)=\frac{\alpha\beta}2,\quad \lim_{r\to\infty} b(r)=2\gamma,
\ee
in view of the behavior $u(\infty)=0, f(\infty)=1$, and 
\bea
U''&=&\frac{A(r)}{r^2} U+B(r),\label{326}\\
f''+\frac2r f'&=&\frac{\left(\beta u^2(r)+\gamma r^2(f^2-1)\right) }{r^2} f,\label{327}
\eea
for small $r>0$, where $U=u-1$ and the coefficients $A(r)$ and $B(r)$ satisfy
\be\label{328}
\lim_{r\to0} A(r)=2,\quad \lim_{r\to 0} B(r)=0,
\ee
in view of $u(0)=1$ and $f(0)=0$, respectively.   We study \eqref{323} first.

Note that the structure of \eqref{327} for $r>0$ small leads us to its linearized form around $f=0$:
\be\label{329}
F''+\frac2r F'=\frac{\beta}{r^2}F,
\ee
whose solution vanishing at $r=0$ is
\be
F(r)=C r^{\sigma},\quad \sigma=\sqrt{\frac14+\beta}-\frac12,\quad r>0,
\ee
where $\sigma$ is the positive root of the characteristic equation $\sigma^2+\sigma-\beta=0$ of
\eqref{329}. This suggests that we may choose the comparison function
\be\label{331}
F_\vep(r)=Cr^{\sigma_\vep},\quad r>0,\quad \sigma_\vep=\sigma(1-\vep),\quad \vep\in(0,1),
\ee
which gives us
\be\label{332}
F_\vep''+\frac2r F_\vep'=\frac{(\sigma^2_\vep+\sigma_\vep)}{r^2}F_\vep.
\ee
Now for any $\vep\in(0,1)$ there is $r_\vep>0$ such that 
\be\label{333}
\beta u^2(r)+\gamma r^2(f^2-1)>\sigma^2_\vep+\sigma_\vep,\quad r\in(0,r_\vep),
\ee
since $\sigma^2_\vep+\sigma_\vep<\beta$ and the left-hand side of \eqref{333} tends to $\beta$ as $r\to0$. Inserting \eqref{333}
into \eqref{327} and using \eqref{332}, we arrive at
\be\label{334}
(f-F_\vep)''+\frac2r(f-F_\vep)'>\frac{(\sigma_\vep^2+\sigma_\vep)}{r^2}(f-F_\vep),\quad r\in(0,r_\vep).
\ee
Let $C$ in \eqref{331} be a large enough positive constant such that $f(r_\vep)\leq F_\vep(r_\vep)$. In view of this and
$f(0)=F_\vep(0)=0$ as boundary condition and applying the maximum principle \cite{GT} to the differential inequality \eqref{334},
we have $f(r)-F_\vep(r)\leq 0$ for $r\in(0,r_\vep)$, which establishes the bound
\be
 0<f(r)\leq Cr^{\sigma(1-\vep)},\quad r\in(0,r_\vep),
\ee
resulting in the estimate for $f$ in \eqref{323}.

Now consider \eqref{326}. Take
\be\label{336}
F_\vep(r)=Cr^{2(1-\vep)},\quad r>0,\quad \vep\in\left(0,1\right).
\ee
Then
\be\label{337}
F_\vep''=\frac{2(1-\vep)(1-2\vep)}{r^2} F_\vep.
\ee
Combining \eqref{326} with \eqref{337}, we have
\be\label{338}
(U+F_\vep)''=\frac{A(r)}{r^2}(U+F_\vep)+\frac1{r^2}\left(2(1-\vep)(1-2\vep)-A(r)\right)F_\vep +B(r).
\ee
For fixed $\vep$ and any $C\geq1$ given in \eqref{336}, we see that when $r_\vep>0$ is sufficiently small, we have 
\be\label{339}
\frac1{r^2}\left(2(1-\vep)(1-2\vep)-A(r)\right)F_\vep(r) +B(r)\leq 0,\quad r\in(0,r_\vep),
\ee
in view of \eqref{328}. Inserting \eqref{339} into \eqref{338}, we obtain 
\be\label{340}
(U+F_\vep)''\leq\frac{A(r)}{r^2}(U+F_\vep),\quad r\in(0,r_\vep).
\ee
Choose $C$ in \eqref{336} sufficiently large such that $U(r_\vep)+F_\vep(r_\vep)\geq0$. In view of this, $U(0)+F_\vep(0)=0$,
the differential equality \eqref{340}, and the maximum principle \cite{GT}, we have $U(r)+F_\vep(r)\geq0$ for $r\in (0,r_\vep)$.
That is,
\be
-Cr^{2(1-\vep)}\leq u(r)-1<0,\quad r\in(0,r_\vep),
\ee
which establishes the estimate for $u(r)$ near $r=0$ stated in \eqref{323}.

For the asymptotic estimates \eqref{x123}, we note that the one for $u(r)$ is easy because of the form of the equation of $u$
in \eqref{324} in view of \eqref{325}. To get the estimate for $f(r)$ in \eqref{x123}, we use the comparison function
\be\label{342}
F(r)=C\e^{-\sigma r},\quad r>0,\quad C\geq1.
\ee
Then we have
\be
F''+\frac2r F'=\left(\sigma^2-\frac{2\sigma}r\right)F,\quad r>0,
\ee
which may be used in the equation of $f$ in \eqref{324} to give us
\be
(F+f-1)''+\frac2r(F+f-1)'=b(r)(F+f-1)+\left(\sigma^2-\frac{2\sigma}r-b(r)\right)F+\frac\beta{r^2} fu^2.
\ee
We now set
\be
\sigma=\sqrt{2}\min\{\sqrt{\gamma},\sqrt{\alpha\beta}\}(1-\vep),
\ee
for $\gamma>0$. Then, using \eqref{325} and the estimate for $u(r)$ stated in \eqref{x123}, we wee that there is some large $r_\vep>1$ such that
\be
b(r)>1 \quad\mbox{(say)},\quad \left(\sigma^2-\frac{2\sigma}r-b(r)\right)F(r)+\frac\beta{r^2} f(r)u^2(r)\leq0,\quad r\geq r_\vep.
\ee
Choose $C$ in \eqref{342} large enough such that $F(r_\vep)+f(r_\vep)-1\geq0$. Using this and that $F+f-1$ vanishes at infinity in
the differential inequality
\be
(F+f-1)''+\frac2r(F+f-1)'\leq b(r)(F+f-1),\quad r\geq r_\vep,
\ee
we obtain $F(r)+f(r)-1\geq0$ for all $r\geq r_\vep$ by the maximum principle \cite{GT}. This establishes
\be
-C\e^{-\sigma r}\leq f(r)-1<0,\quad r\geq r_\vep,
\ee
which gives rise to the asymptotic estimate for $f(r)$ stated in \eqref{x123} when $\gamma>0$.

The estimate for $f(r)$ when $\gamma=0$ stated in \eqref{ff} will be established in the next section.
\end{proof}

Thus the proof of part (i) of Theorem \ref{th} is carried out.

We now turn to part (ii) of Theorem \ref{th}.

 Thus, in \eqref{111}, consider the special situation, $\beta=2, \gamma= 0$, with the radial energy functional
\begin{eqnarray}\label{301}\
I(u,f)=\int_{0}^{\infty} \bigg\{2(u')^{2}+\frac{(1-u^{2})^{2}}{r^{2}}+
\alpha r^{2}(f')^{2}+2\alpha f^{2}u^{2}\bigg\}{\mathrm{d}}r,
\end{eqnarray}
which becomes the classical Bogomol'nyi \cite{Bo} and Prasad--Sommerfield \cite{PS} limit of the $SU(2)$ Yang--Mills--Higgs
monopole model with a vanishing Higgs coupling constant or zero Higgs potential density function, known as the BPS self-dual limit,
with setting $f\mapsto \frac f{\sqrt{\alpha}}$.  Thus, below, we only recall some facts which are useful our study,
although, for the purpose of our presentation, we keep the parameter $\alpha$ here
in order to relate the problem to the issues of our interest.

First, note that the occurrence of spontaneously broken symmetry dictates the asymptotic condition
\be\label{x22}
f(\infty)=f_\infty>0,
\ee
where $f_\infty$ is otherwise prescribed which is sometimes referred to as the monopole mass \cite{JT}. Due to the structure of (\ref{301}), it is seen that the energy is symmetric under the change of variables and parameter:
\be\label{x23}
u\mapsto u,\quad f\mapsto{f_\infty}f,\quad \alpha\mapsto \frac\alpha {f_\infty^2}.
\ee
Therefore we may assume $f_\infty=1$ in \eqref{x22} without loss of generality. We will observe this `normalized' asymptotic condition in the sequel. That is, we again follow the boundary condition (\ref{1.2}) for our problem.
The Euler--Lagrange equations associated with \eqref{301} are
\begin{eqnarray}
u''&=&\alpha f^{2}u+\frac{u(u^{2}-1)}{r^{2}},\label{302} \\ 
f''&=& -\frac{2f'}{r}+\frac2{r^{2}}\,{fu^{2}},\label{303}
\end{eqnarray}
which may also be obtained by setting $\beta=2$ and $\gamma=0$ in \eqref{112}--\eqref{113} and whose least-energy
solution may be obtained by minimizing \eqref{301} subject to \eqref{1.2} as before. 

Next, as in \cite{Bo,PS},
and using the boundary conditions \eqref{1.2}, we have
\begin{eqnarray}\label{304}\nonumber
I(u,f)&=&\int_{0}^{\infty} \bigg\{2\left(u'+\sqrt{\alpha}fu\right)^{2}+\left(\sqrt{\alpha}rf'(r)-\frac{(1-u^{2})}{r}\right)^{2}+
2\sqrt{\alpha}(f(1-u^{2}))'\bigg\}{\mathrm{d}}r\\
&\geq& 2\sqrt{\alpha}.
\end{eqnarray}
Hence, we have the energy lower bound, $I(u,f)\geq2\sqrt{\alpha}$,
which is attained when the pair $(u, f)$ satisfies the following BPS-type equations
\begin{eqnarray}\label{305}
&& u'+\sqrt{\alpha}fu=0,  \\ \label{306}
&& \sqrt{\alpha}rf'(r)-\frac{(1-u^{2})}{r}=0,
\end{eqnarray}
which may be solved \cite{PS} to yield the unique solution given
 explicitly by the formulas
\begin{eqnarray}\label{317}
 u(r)=\frac{\sqrt{\alpha}r}{\sinh \sqrt{\alpha}\,r},\quad
f(r)=\coth{\sqrt{\alpha}}r-\frac{1}{\sqrt{\alpha}\, r},\quad r>0.
\end{eqnarray}

It is interesting to note that, among finite-energy solutions satisfying the boundary condition
\eqref{1.2},
 the Euler--Lagrange equations \eqref{302}--\eqref{303} and  the BPS equations \eqref{305}--\eqref{306} are equivalent,
as established by Maison \cite{Maison}. In contrast, for the $SU(3)$ situation, Burzlaff \cite{Burzlaff,Burzlaff2} showed the existence of a non-BPS solution  even within radially symmetric configurations; in the $SU(2)$ setting, without radial symmetry assumption, Taubes
\cite{Taubes} established the existence of an infinite family of non-BPS solutions in the BPS
coupling, whose result was later extended in several important contexts \cite{Bor,Parker,SS,SSU,ST}. Thus, in general
non-Abelian gauge field theories, the equivalence
statement may not be valid.

\section{The general non-BPS situation with $\gamma=0$}\label{s1}
\setcounter{equation}{0}\setcounter{remark}{0}
\setcounter{theorem}{0}\setcounter{remark}{0}
\setcounter{lemma}{0}\setcounter{remark}{0}

In this section, we establish part (iii) of Theorem \ref{th}.
Thus we consider the energy \eqref{111}  when $\gamma=0$ such that the energy functional assumes the form
\begin{eqnarray}\label{041}
I(u,f)= \int_{0}^{\infty} \left\{2(u')^{2}+\frac{(1-u^{2})^{2}}{r^{2}}+
\alpha r^{2}(f')^{2}+\alpha\beta f^{2}u^{2}\right\}\,{\mathrm{d}}r.
\end{eqnarray}
As noted in the previous section, the arbitrary asymptotic limit given in \eqref{x22} may be normalized to fit  into that
stated in \eqref{1.2} through the
rescaling of parameters set in \eqref{x23}.
The Euler--Lagrange equations associated with \eqref{041} are
\begin{eqnarray}
u''&=&\frac{\alpha\beta}{2}f^{2}u+\frac{u(u^{2}-1)}{r^{2}},\label{043} \\ 
(r^2f')'&=&\beta{fu^{2}}.\label{044}
\end{eqnarray}
As before, it is readily shown that \eqref{043}--\eqref{044} has a solution $(u,f)$ that minimizes the energy \eqref{041}, satisfying
the boundary condition \eqref{1.2} and enjoying the property $0<u(r),f(r)<1$ for $r>0$. Thus, using \eqref{124} with $\gamma=0$, we see that
$f'(r)>0$ for all $r>0$. It is less obvious to see that $u(r)$ is also monotone as we now show below.

\begin{lemma} \label{lemma 2.3} Let $(u,f)$ be the solution pair to the equations \eqref{043}--\eqref{044} described above. 
Then the function $u(r)$ strictly decreases.
\end{lemma}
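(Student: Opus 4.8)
The plan is to argue by contradiction, assuming $u$ fails to be strictly decreasing and then exhibiting an interior local minimum lying strictly below a later interior local maximum, at which a chain of inequalities collapses. The key algebraic observation is that \eqref{043} can be written in the factored form
\be\label{plan:fac}
u''=u\,g(r),\qquad g(r):=\frac{\alpha\beta}{2}f(r)^2+\frac{u(r)^2-1}{r^2}.
\ee
Since $0<u<1$ on $(0,\infty)$, the sign of $u''$ coincides everywhere with that of $g$. In particular, at any interior local minimum $r_0$ of $u$ one has $u''(r_0)\ge0$, hence $g(r_0)\ge0$, while at any interior local maximum $r_1$ one has $u''(r_1)\le0$, hence $g(r_1)\le0$. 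The whole argument will hinge on these two inequalities together with the already established facts that $f'>0$ on $(0,\infty)$ (from \eqref{124} with $\gamma=0$) and that $0<u,f<1$.

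Suppose, for contradiction, that $u$ is not strictly decreasing, so there are $0<a<b$ with $u(a)\le u(b)$. First I would locate the two extrema. Because $u(\infty)=0<u(b)$, the function $u$ attains a maximum over $[a,\infty)$ at some interior point $r_1\in(a,\infty)$, a genuine local maximum with $g(r_1)\le0$; one then checks $u(r_1)\ge u(b)>u(a)$ so that $a\in(0,r_1)$ carries a value strictly below $u(r_1)$. Since moreover $u(0^+)=1>u(r_1)$, the minimum of $u$ over $[0,r_1]$ is attained at an interior point $r_0\in(0,r_1)$, a local minimum with $g(r_0)\ge0$ and $u(r_0)\le u(a)<u(r_1)$. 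Now I combine these: as $f$ is strictly increasing, $r_0<r_1$ gives $f(r_1)^2>f(r_0)^2$; as $u(r_1)>u(r_0)$ and $r_1>r_0$ (with $u<1$ keeping every quantity positive) one gets $\tfrac{1-u(r_1)^2}{r_1^2}<\tfrac{1-u(r_0)^2}{r_0^2}$. Feeding $g(r_0)\ge0$ and $g(r_1)\le0$ through these monotonicities yields
\be
\frac{\alpha\beta}{2}f(r_1)^2>\frac{\alpha\beta}{2}f(r_0)^2\ge\frac{1-u(r_0)^2}{r_0^2}>\frac{1-u(r_1)^2}{r_1^2}\ge\frac{\alpha\beta}{2}f(r_1)^2,
\ee
a contradiction. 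Hence $u$ is strictly decreasing.

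The routine bookkeeping lies in the extraction of the extrema in the second step; it uses only $u(\infty)=0$, $0<u<1$, and continuity, but must be set up with care, and one must also dismiss the degenerate possibility that $u$ is constant on a subinterval, where \eqref{plan:fac} would force $g\equiv0$ there—impossible, since $\tfrac{\alpha\beta}{2}f^2$ is increasing while $\tfrac{1-u^2}{r^2}$ cannot match it on an interval (alternatively, $u\equiv0$ is the only constant solution of \eqref{043}, excluded by $u>0$). The genuinely decisive point, and the step I expect to carry the proof, is the inequality chain: it works precisely because the two terms making up $g$ move in opposite, reinforcing directions between a minimum and a subsequent maximum—$\tfrac{\alpha\beta}{2}f^2$ strictly increases thanks to $f'>0$, while $\tfrac{1-u^2}{r^2}$ strictly decreases as both $u$ and $r$ grow—so $g$ cannot be nonnegative at $r_0$ and nonpositive at the later $r_1$. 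The main obstacle is therefore conceptual rather than computational: recognizing this monotone interplay and securing the global positivity $f'>0$ (already available here) is what makes the comparison close.
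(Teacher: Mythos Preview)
Your proof is correct and takes a genuinely different route from the paper's. The paper exploits the fact that $(u,f)$ is an \emph{energy minimizer}: assuming $u(a)<u(b)$, it locates a local minimum $r_1$ and the first return time $r_2>r_1$ with $u(r_2)=u(r_1)$, replaces $u$ by the constant $u(r_1)$ on $(r_1,r_2)$, and shows the modified configuration has strictly smaller energy \eqref{041}, contradicting minimality. The inequality $g(r_1)\ge0$ (your notation) enters there too, but only as one ingredient in an integral energy comparison that also uses the monotonicity of $f$.

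Your argument is purely ODE-based: you pair the second-derivative test at an interior minimum $r_0$ and a later interior maximum $r_1$ with the monotone interplay of the two summands in $g$, obtaining a direct pointwise contradiction. This is more elementary and, notably, more general---it never invokes the variational characterization, so it proves strict monotonicity of $u$ for \emph{any} solution of \eqref{043}--\eqref{044} satisfying $0<u,f<1$ and $f'>0$, not merely the minimizer. The paper's approach, by contrast, is tied to the minimizing property but fits naturally into its energy-method framework. One small cosmetic point: you write $u(a)\le u(b)$ but then use $u(b)>u(a)$; cleaner is to split at the outset into the case where some $a<b$ has $u(a)<u(b)$ (your main chain) and the case where $u$ is nonincreasing but constant on a subinterval (your final paragraph), exactly as the paper does.
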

\begin{proof}
 We first
show that $ u$ is nonincreasing. Suppose otherwise that there are $0<a<b<\infty$ so that
$u(a) < u(b)$. Let $r_{1} \in(0, b)$ satisfy
\begin{eqnarray}
 r_{1}=\sup\left\{\hat{r}\in(0,b)\bigg|u(\hat{r})=\inf_{r\in (0,b)} u(r)\right\}.
\end{eqnarray}
Therefore $u(r) > u(r_{1})$ for all $ r \in (r_{1}, b)$. Since $u(r)\rightarrow 0$  as $r \rightarrow \infty$, we have a unique
$r_{2}>r_{1}$
satisfying
\begin{eqnarray}
 r_{2}=\inf\left\{r>r_{1}|u(r)= u(r_{1})\right\}.
\end{eqnarray}
(In fact, $r_2>b$.)
Now modify $u$ by setting
\begin{equation}
\tilde{u}(r)=
    u(r_{1}), \quad  r\in(r_{1},r_{2});\quad
   \tilde{u}(r)= u(r), \quad  r\notin(r_{1},r_{2}).
\end{equation}
Then $(\tilde{u},f)\in X$ but $I(\tilde{u},f)<I(u,f)$ which is false. In fact we have only to
compare the energies over the interval $(r_{1},r_{2})$. That is, we are to show that
$
\tilde{J}<J
$
where
\begin{eqnarray}
J&=& \int_{r_{1}}^{r_{2}}\left\{\left(2(u')^{2}+\frac{(1-u^{2})^{2}}{r^{2}}\right)+
\alpha r^{2}(f')^{2}+\alpha\beta f^{2}u^{2}
\right\}{\mathrm{d}}r,\\ 
\tilde{J}&=& \int_{r_{1}}^{r_{2}}\left\{\frac{(1-\tilde{u}^{2})^{2}}{r^{2}}+
\alpha r^{2}(f')^{2}+\alpha\beta f^{2}\tilde{u}^{2}
\right\}{\mathrm{d}}r.
\end{eqnarray}
We recall by the definition of $r_{1}$ that $u'(r_{1})=0$ and $u''(r_{1})\geq 0$
since $r_{1}$ is a local minimum point.
Inserting this information into  \eqref{043},  we find
\begin{eqnarray} \label{125}
\frac{\alpha\beta}{2} f^{2}(r_{1})+\frac{(u^{2}(r_{1})-1)}{r_{1}^{2}}\geq 0,
\end{eqnarray}
since $u(r_1)>0$.
On the other hand,
\begin{eqnarray}
J-\tilde{J}&=& \int_{r_{1}}^{r_{2}}\bigg\{2(u')^{2}+\left(\frac{(1-u^{2})^{2}}{r^{2}}-\frac{(1-\tilde{u}^{2})^{2}}{r^{2}}\right)+
\alpha \beta f^{2}(u^{2}-\tilde{u}^{2})\bigg\} {\mathrm{d}}r\nn\\ 
&\geq&\int_{r_{1}}^{r_{2}}\bigg\{2(u')^{2}+\bigg(u^{2}(r)-u^{2}(r_{1})\bigg)
\bigg(\alpha\beta f^{2}(r)+\frac{1}{r^{2}}\left(u^{2}(r)+u^{2}(r_{1})-2\right)\bigg)\bigg\}{\mathrm{d}}r\nn \\
&>&\int_{r_{1}}^{r_{2}}2(u')^{2}{\mathrm{d}}r+
\bigg({\alpha\beta} f^{2}(r_{1})+\frac{2(u^{2}(r_{1})-1)}{r_{1}^{2}}\bigg)\int_{r_{1}}^{r_{2}}\bigg(u^{2}(r)-u^{2}(r_{1})\bigg){\mathrm{d}}r\nn
 \\ 
&>& 0,
\end{eqnarray}
in view of  \eqref{125} and the fact that $f(r)$ increases. Consequently $u(r)$ can only be nonincreasing.

If $u$ is not strictly decreasing, it must be a constant in an interval. So we arrive at a
contradiction by  using the equation  \eqref{043} because it implies that $r^2 f(r)$ is constant, which
is false since $f(r)$ increases. Thus the lemma follows.
\end{proof}

We now estimate the energy carried by $(u,f)$. For convenience, we denote the energy \eqref{301} by $I(u,f;\alpha,\beta)$. Then we have by \eqref{304} the lower estimate
\be\label{311}
I(u,f;\alpha,\beta)\geq I(u,f;\alpha,2)\geq 2\sqrt{\alpha},\quad \beta\geq2.
\ee
Moreover, using \eqref{304} again, we have
\begin{eqnarray} \nonumber
I(u,f;\alpha,\beta)
&\geq& \int_{0}^{\infty} \bigg\{2(u')^{2}+\frac{(1-u^{2})^{2}}{r^{2}}+
\frac{\alpha\beta}{2}r^{2}(f')^{2}+\alpha\beta f^{2}u^{2}
\bigg\} {\mathrm{d}}r\\ 
&=&I\left(u,f;\frac{\alpha\beta}2,2\right)\nn\\
&\geq& \sqrt{2\alpha\beta},\quad 0<\beta<2.\label{312}
\end{eqnarray}
Summarizing \eqref{311} and \eqref{312}, we have the energy lower bound
\begin{eqnarray} \label{313}
I(u,f)
\geq \min\left\{\sqrt{2\alpha\beta},2\sqrt{\alpha}\right\},\quad\forall\alpha,\beta>0.
\end{eqnarray}

To get some upper estimates for the energy, we make the decomposition
\be\label{xx314}
I(u,f;\alpha,\beta)=I(u,f;\alpha,2)+\alpha(\beta-2)\int_0^\infty f^{2}u^{2}\,\dd r.
\ee
Now use the BPS solution \eqref{317}, denoted as $(u_\BPS,f_\BPS)$, as a test configuration to get
\begin{eqnarray}\label{xx315}
\int_0^\infty f_\BPS^{2}(r)u_\BPS^{2}(r)\,\dd r&=&
\int_0^\infty\left(\frac{\sqrt{\alpha}r}{\sinh \sqrt{\alpha}\,r}\right)^2
\left(\coth{\sqrt{\alpha}}r-\frac{1}{\sqrt{\alpha}\, r}\right)^2\,\dd r\nn\\
&=&\frac1{\sqrt{\alpha}}\int_0^\infty\left(\frac{r}{\sinh r}\right)^2
\left(\coth r-\frac{1}{ r}\right)^2\,\dd r\nn\\
&=&\frac1{3\sqrt{\alpha}}\left(\frac{\pi^2}6-1\right).
\end{eqnarray}
Thus, inserting $(u_\BPS,f_\BPS)$ into the right-hand side of \eqref{xx314} and using \eqref{xx315}, we have
\begin{eqnarray}
I(u,f;\alpha,\beta)&\leq& I(u_\BPS,f_\BPS;\alpha,\beta)\nn\\
&=&I(u_\BPS,f_\BPS;\alpha,2)+\alpha(\beta-2)\int_0^\infty f^2_\BPS u^2_\BPS\,\dd r\nn\\
&=&2\sqrt{\alpha}+\frac1{3}\left(\frac{\pi^2}6-1\right)\sqrt{\alpha}(\beta-2)\nn\\
&=&\frac{\sqrt{\alpha}}3\left(8-\frac{\pi^2}3+\left[\frac{\pi^2}6-1\right]\beta\right).\label{316}
\end{eqnarray}

Summarizing (\ref{313}) and (\ref{316}), we obtain the estimates of the energy of the solution pair $(u,f)$ as follows
\begin{eqnarray}\label{xx317}
\sqrt{\alpha} \min \bigg\{\sqrt{2\beta},2\bigg\}\leq I(u,f)
 \leq \frac{\sqrt{\alpha}}3\left(8-\frac{\pi^2}3+\left[\frac{\pi^2}6-1\right]\beta\right).
\end{eqnarray}
It is interesting that when $\beta=2$ we arrive at the BPS situation, $I(u,f)=2\sqrt{\alpha}$, as anticipated, which is hardly
surprising. Note also that, except for the critical situation $\beta=2$ where \eqref{xx317} becomes equality, in any
non-BPS situation $\beta\neq2$, inequalities in \eqref{xx317} are strict because the pair $(u_\BPS,f_\BPS)$ does not satisfy the coupled 
equations \eqref{043}--\eqref{044}.

In Figure \ref{F} we plot the energy lower and upper bounds given in \eqref{xx317} for $\frac{I(u,f)}{\sqrt{\alpha}}$ as functions of $\beta$.
\begin{figure}[h]
\begin{center}
\includegraphics[height=6cm,width=8cm]{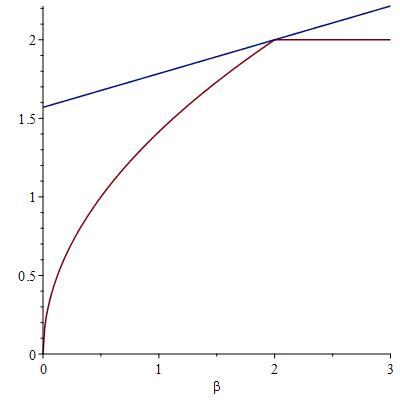}
\caption{A plot of the energy lower and upper bounds \eqref{xx317} for $\frac{I(u,f)}{\sqrt{\alpha}}$ as curves over $\beta>0$. It is seen that the
bounds coincide at  and stay close near the critical BPS point $\beta=2$.}
\label{F}
\end{center}
\end{figure}

We now verify the asymptotic estimate for $f(r)$ stated in \eqref{ff}. For this purpose, we integrate \eqref{044} and use \eqref{x120} to get
\be\label{418}
r^2 f'(r)=\int_0^r \beta f(\rho)u^2(\rho)\,\dd\rho.
\ee
In view of the estimate for $u(r)$ stated in \eqref{x123}, we see that the right-hand side is a bounded quantity for $r>0$. Thus,
integrating \eqref{418}, we have
\be
1-f(r)=\int_r^\infty \frac1{\eta^2}\left(\int_0^\eta  \beta f(\rho)u^2(\rho)\,\dd\rho\right)\,\dd\eta=\mbox{\rm O}\left(\frac1r\right), \quad r\to\infty,
\ee
which establishes \eqref{ff}. 

We note that the explicit BPS solution \eqref{317} confirms the asymptotic estimates stated in Lemma
\ref{101}
in the case when $\gamma=0$.

\section{The situation when $\gamma=\infty$}
\setcounter{equation}{0}

We now turn to part (iv) of Theorem \ref{th}.
In this situation, following \cite{DK} and with our notation, the energy functional now reads 
\be\label{41}
I(u)=\int_0^\infty \left\{ 2(u')^2+\frac{(1-u^2)^2}{r^2} +\alpha\beta u^2\right\}\,\dd r,
\ee
with the associated Euler--Lagrange equation subject to the corresponding boundary condition:
\be\label{42}
u''=\frac12\alpha\beta u+\frac1{r^2}(u^2-1)u,\quad r>0;\quad u(0)=1,\quad u(\infty)=0.
\ee

This type of problems also occur in other situations in gauge field theory (e.g., a discussion in the next section). Due to such 
separate interest,
we summarize
our existence and uniqueness results regarding \eqref{42} as follows.

\begin{theorem}
The boundary value problem \eqref{42} has a solution which minimizes the energy \eqref{41}. Furthermore, such a solution
satisfies the properties that $0<u(r)<1$ for $r>0$, $u(r)$ strictly decreases, and
\be\label{es}
1< \frac{I(u)}{\sqrt{2\alpha\beta}}< \sqrt{1+4\ln2}.
\ee
In fact, any finite-energy solution of \eqref{42}
enjoys the additional properties
\be\label{x63}
\lim_{r\to0} u'(r)=0,\quad u(r), u'(r)=\mbox{\rm O} \left(\e^{-\sqrt{\frac{\alpha\beta}2}(1-\vep)r}\right)\mbox{ as }r\to\infty,
\ee
where $\vep\in(0,1)$ may be taken to be arbitrarily small.
Besides, any nonnegative solution $u$ to \eqref{42} satisfies the global pointwise lower bounds
\be\label{x64}
\e^{-\sqrt{\frac{\alpha\beta}2}r}< u(r)<1,\quad r>0,
\ee
and is unique. In particular, subject to the boundary conditions in \eqref{42}, the energy \eqref{41} has a unique minimizer.
\end{theorem}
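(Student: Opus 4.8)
The plan is to solve \eqref{42} by the direct method, which here is markedly cleaner than the coupled system of Section \ref{s3} because \emph{both} boundary conditions in \eqref{42} are forced by the finiteness of \eqref{41}, so none of the regularization used in Lemmas \ref{lemma 2}--\ref{lemma 3} is needed. First I would set $\eta=\inf I(u)$ over the class of locally absolutely continuous $u$ with $I(u)<\infty$, $u(0)=1$, $u(\infty)=0$, take a minimizing sequence $\{u_n\}$ truncated so that $0\le u_n\le1$, and extract by a diagonal argument a weak $W^{1,2}_{\mathrm{loc}}$ limit $u$ with $I(u)\le\eta$ via weak lower semicontinuity. The boundary data survive: a Cauchy--Schwarz estimate of the type used in \eqref{3} shows $u$ is uniformly continuous up to $r=0$, so $\lim_{r\to0}u(r)$ exists, and the finiteness of $\int_0^1 r^{-2}(1-u^2)^2\,\dd r$ then forces $u(0)=1$; likewise $u,u'\in L^2$ near infinity gives $u(\infty)=0$. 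Hence $u$ minimizes \eqref{41} and solves \eqref{42}. The strict bounds $0<u<1$ and strict monotonicity follow exactly as in Lemmas \ref{101} and \ref{lemma 2.3}: $u\equiv0$ is an equilibrium excluded by $u(0)=1$ and ODE uniqueness, $u\equiv1$ violates \eqref{42}, and the flattening/rearrangement argument using the sign condition $\tfrac12\alpha\beta+r_1^{-2}(u^2(r_1)-1)\ge0$ at an interior minimum yields that $u$ strictly decreases.

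For the asymptotics \eqref{x63} I would reuse the comparison-function machinery of Lemma \ref{101}. Writing $U=u-1$ and linearizing near $r=0$ gives $U=\mbox{\rm O}(r^{2(1-\vep)})$ as in \eqref{323}; consequently $r^{-2}(u^2-1)u=\mbox{\rm O}(r^{-2\vep})$ is integrable at the origin while $\tfrac12\alpha\beta u$ is bounded, so the right-hand side of \eqref{42} lies in $L^1$ near $0$ and $u'$ has a finite limit there, which must be $0$ since $(u-1)/r=\mbox{\rm O}(r^{1-2\vep})\to0$. The exponential decay of $u$ is the $u$-estimate already recorded in \eqref{x123}, and the same rate for $u'$ follows from $u'(r)=-\int_r^\infty u''(\rho)\,\dd\rho$ together with the exponential bound on $u''$ read off from \eqref{42}. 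The global lower bound \eqref{x64} is a clean maximum-principle comparison: with $w(r)=\e^{-\sqrt{\alpha\beta/2}\,r}$ one has $w''=\tfrac12\alpha\beta w$, while $0<u<1$ makes $r^{-2}(u^2-1)u<0$, so $u''<\tfrac12\alpha\beta u$; hence $v:=w-u$ obeys $v''>\tfrac12\alpha\beta v$ with $v(0)=0$, $v<0$ for small $r>0$, and $v\to0$ at infinity, so a positive interior maximum of $v$ would contradict the differential inequality and therefore $v<0$, i.e. $u>w$, for $r>0$. Applied to any nonnegative solution (first ruling out $u>1$ at an interior maximum via \eqref{42}, then $0<u<1$ as above), the same comparison gives \eqref{x64} for every such solution.

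The main obstacle is uniqueness, flagged precisely because \eqref{41} is not convex. The structural key is that, writing \eqref{42} as $u''=g(r,u)$ with $g(r,u)=\tfrac12\alpha\beta u+r^{-2}(u^2-1)u$, the ratio $g(r,u)/u=\tfrac12\alpha\beta+r^{-2}(u^2-1)$ is \emph{strictly increasing} in $u>0$, which places the problem in the Brezis--Oswald framework. Given two nonnegative solutions $u_1,u_2$, I would test the two equations with $(u_1^2-u_2^2)/u_1$ and $(u_1^2-u_2^2)/u_2$, subtract, and integrate by parts over $(0,\infty)$, which produces the identity
\be
-\int_0^\infty Q\,\dd r=\int_0^\infty\frac{(u_1^2-u_2^2)^2}{r^2}\,\dd r,\qquad Q=\Bigl(\tfrac{u_2}{u_1}u_1'-u_2'\Bigr)^2+\Bigl(u_1'-\tfrac{u_1}{u_2}u_2'\Bigr)^2\ge0.
\ee
Here the right-hand side arises because $g(r,u_1)/u_1-g(r,u_2)/u_2=r^{-2}(u_1^2-u_2^2)$, and $Q\ge0$ is the Picone-type sum of squares coming from the integration by parts. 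Since the left side is $\le0$ and the right side $\ge0$ while the two are equal, both vanish, forcing $u_1\equiv u_2$. The delicate point is the vanishing of the boundary terms in the integration by parts: at $r=0$ it rests on $u_i(0)=1$ and $u_i'(0)=0$ from \eqref{x63}, and at infinity on the exponential decay combined with the lower bound \eqref{x64}, which keeps the ratios $u_2^2/u_1$ appearing in the test functions bounded and vanishing. As the minimizer is one such solution, this also yields uniqueness of the minimizer.

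Finally the energy bounds \eqref{es}. For the lower bound I would drop the nonnegative term $r^{-2}(1-u^2)^2$ and use $2(u')^2+\alpha\beta u^2\ge2\sqrt{2\alpha\beta}\,|u\,u'|$, integrating to get $I(u)\ge2\sqrt{2\alpha\beta}\,\bigl|\int_0^\infty uu'\,\dd r\bigr|=\sqrt{2\alpha\beta}$; the inequality is strict because the discarded term $\int_0^\infty r^{-2}(1-u^2)^2\,\dd r>0$ as $u<1$, giving $I(u)/\sqrt{2\alpha\beta}>1$. For the upper bound I would invoke the minimizing property against the explicit test configuration $v(r)=\e^{-kr}$, which meets both boundary conditions. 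A direct computation, in which $\int_0^\infty r^{-2}(1-\e^{-2kr})^2\,\dd r=4k\ln2$ by a Frullani integral, yields $I(\e^{-kr})=k(1+4\ln2)+\alpha\beta/(2k)$; minimizing over $k$ at $k=\sqrt{\alpha\beta/(2(1+4\ln2))}$ gives $I(u)\le\sqrt{2\alpha\beta}\,\sqrt{1+4\ln2}$, strict since $\e^{-kr}$ does not solve \eqref{42}. Together these establish \eqref{es}.
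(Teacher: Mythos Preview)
Your proposal is correct and tracks the paper's argument closely for existence, the bounds $0<u<1$, monotonicity, the asymptotics \eqref{x63}, the comparison giving \eqref{x64}, and the energy estimates \eqref{es} (the paper likewise drops the $(1-u^2)^2/r^2$ term and completes the square for the lower bound, and uses the same one-parameter family $\e^{-ar}$ minimized over $a$ for the upper bound). One incidental difference: the paper first records a Pohozaev-type partition identity $\int\{2(u')^2+(1-u^2)^2/r^2\}\,\dd r=\alpha\beta\int u^2\,\dd r$ obtained from $\left.\frac{\dd}{\dd\delta}I(u(\delta\,\cdot))\right|_{\delta=1}=0$, yielding $I(u)=2\alpha\beta\int u^2\,\dd r$; this is not needed for \eqref{es} but is a nice structural addendum you might include.

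The one genuine methodological difference is in the uniqueness step. You invoke the symmetric Brezis--Oswald/Picone device, testing the two equations against $(u_1^2-u_2^2)/u_1$ and $(u_1^2-u_2^2)/u_2$ and subtracting, which produces the sum-of-squares $Q$ on one side and $\int r^{-2}(u_1^2-u_2^2)^2\,\dd r$ on the other. The paper instead works asymmetrically: setting $w=u_1-u_2$, it derives $(w/u_1)(w'u_1-wu_1')'=r^{-2}(u_1u_2+u_2^2)w^2$ and integrates once by parts to obtain
\[
-\int_0^\infty \frac{(w'u_1-wu_1')^2}{u_1^2}\,\dd r=\int_0^\infty \frac{u_1u_2+u_2^2}{r^2}\,w^2\,\dd r,
\]
from which $w\equiv0$ follows. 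Both routes hinge on the same monotonicity of $g(r,u)/u$ in $u$ and on the same endpoint control (at $r=0$ via $u_i(0)=1$, $u_i'(0)=0$; at $r=\infty$ via the exponential decay together with the lower bound \eqref{x64} to tame the ratios). Your symmetric version is perhaps more recognizable as a named technique and makes the sign structure transparent through two squares; the paper's single-Wronskian version is slightly leaner, needing only one division and one integration by parts. Either is acceptable here.
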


\begin{proof}
As before, it is not hard to prove that \eqref{42} has a solution which minimizes the energy \eqref{41}
and decreases monotonically. The energy estimates \eqref{es} will be obtained later.

In fact, the result about the limit of $u'(r)$ as $r\to0$ in \eqref{x63} holds as a consequence of the estimate \eqref{323}.  Furthermore, near $r=\infty$, the differential equation in $u$ in \eqref{42} may be
approximated by the linear equation $\eta''=\frac{\alpha\beta}2\eta$ which leads to the exponential decay estimates stated
in \eqref{x63} as well. Below we elaborate on \eqref{x64} and the uniqueness of a nonnegative solution in detail.

Let $u\geq0$ be a nonnegative solution of \eqref{42}. Then $u(r)>0$ for all $r>0$ otherwise there is some $r_0>0$ such that
$u(r_0)=0, u'(r_0)=0$, resulting in $u(r)=0$ for all $r>0$ by the uniqueness of a solution to the initial value problem of an
ordinary differential equation, which is false.
Moreover, using the maximum principle in \eqref{42}, we have $u<1$.
Hence we have $u''<\frac12\alpha\beta u$. Now let $\eta$ denote the left-hand-side exponential function in \eqref{x64}.
Then $\eta''=\frac12\alpha\beta \eta$ and $(\eta-u)''> \frac12\alpha\beta (\eta-u)$. Thus, using the boundary condition
that $\eta-u$ vanishes at $r=0$ and $r=\infty$, we get $(\eta-u)(r)<0$ for all $r>0$ in view of the maximum principle. So
\eqref{x64} follows.

Let $u_1$ and $u_2$ be two finite-energy nonnegative solutions of \eqref{42}. So $u_1(r),u_2(r)>0$ for all $r>0$. Set
$w=u_1-u_2$. Then $w$ satisfies
\begin{eqnarray}\label{0216}
w''=\frac{\alpha\beta}{2}\, w+\frac1{r^2}\left(u^2_1+u_1 u_2 +u_2^2-1\right)w.
\end{eqnarray}
Using  \eqref{42} with $u=u_1$ and \eqref{0216}, we get
\be \label{0217}
\frac{w}{u_{1}}(w'u_{1}-wu_{1}')'
=\frac1{r^2}\,({u_{1}u_{2}+u_{2}^{2}})\,w^{2}.
\ee
Integrating \eqref{0217} over the interval $0<r<R$, we have
\begin{eqnarray}\label{0218}
\frac{w}{u_{1}}(w'u_{1}-wu_{1}')\bigg|_{0}^{R}-\int_{0}^{R}\frac{(w'u_{1}-wu_{1}')^{2}}{u_{1}^{2}}{\mathrm{d}}r
=\int_{0}^{R}\frac1{r^2}({u_{1}u_{2}+u_{2}^{2}})\,w^{2}{\mathrm{d}}r.
\end{eqnarray}
On the other hand, in view of \eqref{x63}--\eqref{x64}, we have
\be\label{x68}
\lim_{r\to\infty} u_2 (r)\frac{u_1'(r)}{u_1(r)}=0.
\ee
Letting $R\to \infty$ in \eqref{0218} and applying \eqref{x63} and  \eqref{x68}, we arrive at
\begin{eqnarray}\label{0222}
-\int_{0}^{\infty}\frac{(w'u_{1}-wu_{1}')^{2}}{u_{1}^{2}}{\mathrm{d}}r
=\int_{0}^{\infty}\frac1{r^2}({u_{1}u_{2}+u_{2}^{2}})\,w^{2}{\mathrm{d}}r.
\end{eqnarray}
Since $u_1,u_2>0$, so $w\equiv0$ and the uniqueness result follows.
\end{proof}

{\em Note.} Since the energy functional \eqref{41} is not convex, the uniqueness of a critical point of it is generally not ensured. Our
theorem however asserts that  \eqref{42} has a unique minimizer as a solution to \eqref{42}.
\medskip

We now estimate the energy the unique minimizer of \eqref{41} carries.

First, let $u$ be a finite-energy solution of \eqref{42}. As a critical point of \eqref{41}, we see that the rescaled function 
$u_\delta(r)=u(\delta r)$ satisfies
\be\label{43}
\left(\frac{\dd  I(u_\delta)}{\dd\delta}\right)_{\delta=1}=0.
\ee
Inserting \eqref{41} into \eqref{43}, we arrive at the energy partition identity
\be
\int_0^\infty \left\{ 2(u')^2+\frac{(1-u^2)^2}{r^2} \right\}\,\dd r=\int_0^\infty \alpha\beta u^2\,\dd r,
\ee
resulting in the much simplified expression for the minimum energy
\be
I(u)= 2\alpha\beta \int_0^\infty u^2\,\dd r.
\ee

To get a lower estimate of the energy, we take $v$ as a test function satisfying $v(0)=1$, $v(\infty)=0$, and use the BPS method to obtain
\begin{eqnarray}
I(v)
&>& \int_{0}^{\infty} \bigg\{2(v')^{2}+
\alpha\beta v^{2}
\bigg\} {\mathrm{d}}r\nn\\ \nonumber
&=&2 \int_{0}^{\infty} \bigg\{\bigg(v'+\sqrt{\frac{\alpha\beta}2}\,v\bigg)^{2}-
\sqrt{\frac{\alpha\beta}2} (v^2)'
\bigg\} {\mathrm{d}}r\\ 
&\geq&
\sqrt{2\alpha\beta},\label{46}
\end{eqnarray}
so that the lower bound is attained when $v$ solves $v'+\sqrt{\frac{\alpha\beta}2}\,v=0$ or
 \be\label{47}
v=\e^{-\sqrt{\frac{\alpha\beta}2}\,r},
\ee
which happens to be the lower bound function in \eqref{x64}.
On the other hand, using \eqref{47} as a test function, we obtain an upper estimate for the energy
\begin{eqnarray}\label{48}
I(u)\leq I(v)
&=& \int_{0}^{\infty} \bigg\{2\left(v'+\sqrt{\frac{\alpha\beta}2}v\right)^{2}-
\sqrt{2\alpha\beta} (v^2)'+\frac{(1-v^{2})^{2}}{r^{2}}
\bigg\}{\mathrm{d}}r\nn \\ \nonumber
&=& \sqrt{2\alpha\beta}+\int_{0}^{\infty}\frac{(1-\e^{-\sqrt{2\alpha\beta}\,r})^{2}}{r^{2}}{\mathrm{d}}r\nn\\
&=&\sqrt{2\alpha\beta}\left(1+\int_0^\infty \frac{(1-\e^{-r})^{2}}{r^{2}}{\mathrm{d}}r\right)=\sqrt{2\alpha\beta}(1+2\ln2).
\end{eqnarray}
Therefore, combining \eqref{46} and \eqref{48}, we get the lower and upper estimates of the energy $I(u)$ as follows:
\begin{eqnarray}\label{59}
\sqrt{2\alpha\beta}< I(u)<\sqrt{2\alpha\beta}(1+2\ln2),
\end{eqnarray}
where the right-hand side inequality in \eqref{59} is strict since \eqref{47} does not satisfy \eqref{42}.

We note that the profile \eqref{47} suggests that we may further improve the upper bound in \eqref{59} by taking a trial 
{\em undetermined} profile $v_a(r)
=\e^{-a r}$ ($a>0$) and minimizing the function 
\bea
F(a)&=&I(v_a)=a+\frac{\alpha\beta}{2a}+\int_0^\infty \frac{(1-\e^{-2a r})^2}{r^2}\,\dd r\nn\\
&=&a\left(1+4\ln 2\right)+\frac{\alpha\beta}{2a},
\eea
giving rise to the solution
\be
a_0^2=\frac{\alpha\beta}{2(1+4\ln 2)},\quad F(a_0)=\sqrt{2\alpha\beta(1+4\ln2)}.
\ee
Consequently we obtain an improvement upon \eqref{59}:
\be\label{612}
\sqrt{2\alpha\beta}< I(u)<\sqrt{2\alpha\beta(1+4\ln2)},
\ee
where the right-hand side inequality is again strict because $v_a(r)$ does not satisfy \eqref{42}.
This result is \eqref{es} which presents a significant improvement over \eqref{59} since $(1+2\ln 2)-\sqrt{(1+4\ln2)}>\frac7{16}$.

It will be interesting to study whether the boundary value problem \eqref{42} has a unique solution without assuming $u\geq0$.
\medskip

Our study may find applications in other related problems.
As an example, consider the $SO(3)$ Georgi--Glashow model \cite{KZ,W} described by the Lagrangian density
\be
{\cal L}=-\frac14G^a_{\mu\nu}G^{a\mu\nu}-\frac12 (D_\mu\phi)^a (D^\mu\phi)^a -\frac\lm8\left(\phi^a\phi^a-\frac{m^2}\lm\right)^2,
\ee
where $a=1,2,3$ is the group index, $A_\mu=(A^a_\mu)$ a gauge field, $\phi$ a scalar field in the adjoint representation of the gauge group, $m,\lm>0$, and
\be
G_{\mu\nu}=\pa_\mu A_\nu -\pa_\nu A_\mu +e[A_\mu,A_\nu],\quad D_\mu \phi=\pa_\mu\phi+e[A_\mu,\phi],
\ee
are the field strength tensor and  gauge-covariant derivative, such that the spontaneously broken symmetry results in the vector
field mass $M_W$, Higgs boson mass $M_H$, and the mass ratio $\epsilon$, given by \cite{KZ}
\be
M_W=\frac{em}{\sqrt{\lm}},\quad M_H=m,\quad \epsilon=\frac{M_H}{M_W}=\frac{\sqrt{\lm}}e,
\ee
 respectively.
The static spherically symmetric monopole soliton assumes the hedgehog form \cite{KZ,Pol,tH}:
\be
\phi^a=\frac{H(r)}{er^2} x^a,\quad A^a_0=0,\quad A^a_i=\epsilon_{aij} x^j\frac{(1-K(r))}{er^2},
\ee
whose energy is  $E=-\int_{\bfR^3}{\cal L}\,\dd x$ which in turn is given by \cite{KZ}
\be\label{75}
E=\frac1{c_0}\int_0^\infty\left\{(K')^2+\frac{(K^2-1)^2}{2r^2}+\frac{H^2 K^2}{r^2}+\frac{(rH'-H)^2}{2r^2}
+\frac{\lm r^2}{8e^2}\left(\frac{H^2}{r^2}-\frac{m^2 e^2}{\lm}\right)^2\right\}\,\dd r,
\ee
where $c_0=\frac{e^2}{4\pi}$ is the fine-structure constant. Furthermore, with the rescaled radial variable $M_W r\mapsto r$
and the substitution $u=K,f=\frac Hr$, the
energy \eqref{75} becomes
\be
E=\frac{M_W}{c_0}\, C(\epsilon),
\ee
where \cite{FOR,KZ}
\be\label{77}
C(\epsilon)=\int_0^\infty\left\{(u')^2+\frac{(u^2-1)^2}{2r^2}+{f^2 u^2}+\frac{r^2}2(f')^2
+\frac{\epsilon^2}{8}r^2\left(f^2-1\right)^2\right\}\,\dd r,
\ee
and $u,f$ are subject to the same boundary conditions stated in \eqref{1.2}. This functional is covered as a special case of \eqref{111}. In particular, $C(0)=1$ since the minimizer of the right-hand side of \eqref{77} is given by the BPS solution \eqref{317}
with $\alpha=1$, and $C(\infty)$ is given by setting $f=1$ in \eqref{77} with $u$ a minimizer of the resulting reduced functional. That is,
\be
C(\infty)=\min\left\{\int_0^\infty\left\{(u')^2+\frac{(u^2-1)^2}{2r^2}+{u^2}\right\}\,\dd r\,\bigg|\, u(0)=1,u(\infty)=0\right\}.
\ee
Hence, in view of the result \eqref{es} with $\alpha\beta=2$, we arrive at the estimates
\be
1< C(\infty)< \sqrt{1+4\ln 2}\approx 1.9423153.
\ee

Note that, in \cite{FOR,KZ},  by using numerical solutions, it is estimated  that $C(\infty)=1.787$. This result is consistent with our estimates above.

\medskip

Our methods may also be applied to other  monopole and dyon existence problems
 including those formulated in \cite{Burzlaff,COFN,DT,KM,LY,LW,Wein}.
\medskip

The authors would like to thank an anonymous referee whose thoughtful suggestions helped improve the manuscript.

\end{document}